\DeclareMathSymbol{\widehatsym}{\mathord}{largesymbols}{"62}
\newtheorem{theorem}{Theorem}
\newtheorem{rmk}[theorem]{Remark}
\newtheorem{cor}[theorem]{Corollary}
\newtheorem{lemma}[theorem]{Lemma}
\let\l@ENGLISH\l@english
\begin{document}

\title{Unified Scaling of Polar Codes: Error Exponent, Scaling Exponent, Moderate Deviations, and Error Floors}
\author{Marco~Mondelli, S.~Hamed~Hassani, and~R\"{u}diger~Urbanke%
\thanks{M. Mondelli and R. Urbanke are with the School of Computer and Communication Sciences,
EPFL, CH-1015 Lausanne, Switzerland
(e-mail: \{marco.mondelli, ruediger.urbanke\}@epfl.ch).

S. H. Hassani is with the Computer Science Department, ETH Z\"{u}rich, Switzerland
(e-mail: hamed@inf.ethz.ch).
}
}

\maketitle

\begin{abstract}
Consider the transmission of a polar code of block length $N$ and rate
$R$ over a binary memoryless symmetric channel $W$ and let $P_{\rm
e}$ be the block error probability under successive cancellation decoding.
In this paper, we develop new bounds that characterize the relationship
of the parameters $R$, $N$, $P_{\rm e}$, and the quality of the
channel $W$ quantified by its capacity $I(W)$ and its Bhattacharyya
parameter $Z(W)$.

In previous work, two main regimes were studied. In the
\emph{error exponent} regime, the channel $W$ and the rate $R<I(W)$
are fixed, and it was proved that the error probability $P_{\rm
e}$ scales roughly as $2^{-\sqrt{N}}$. In the \emph{scaling exponent}
approach, the channel $W$ and the error probability $P_{\rm e}$ are
fixed and it was proved that the gap to capacity $I(W)-R$
scales as $N^{-1/\mu}$. Here, $\mu$ is called \emph{scaling exponent}
and this scaling exponent depends on the channel $W$. A heuristic computation for the
binary erasure channel ($\rm BEC$) gives $\mu=3.627$ and it was shown that, for any channel $W$, $3.579 \le \mu \le 5.702$.

Our contributions are as follows. First, we
provide the tighter upper bound $\mu \le 4.714$ valid for any $W$.
With the same technique, we obtain the upper bound $\mu \le 3.639$ for the case of the $\rm BEC$; this upper bound approaches very closely the
heuristically derived value for the scaling exponent of the erasure channel.

Second, we develop a trade-off between the gap to capacity $I(W)-R$
and the error probability $P_{\rm e}$ as functions of the block
length $N$. In other words, we neither fix the gap to capacity
(error exponent regime) nor the error probability (scaling exponent
regime), but we do consider a \emph{moderate deviations} regime in
which we study how fast both quantities, as functions of the block length $N$, simultaneously go to $0$.

Third, we prove that polar codes are not affected by \emph{error
floors}. To do so, we fix a polar code of block length $N$ and rate
$R$.  Then, we vary the channel $W$ and study the impact of this variation on the error probability.  We show that the error
probability $P_{\rm e}$ scales as the Bhattacharyya parameter $Z(W)$
raised to a power that scales roughly like ${\sqrt{N}}$. This agrees with
the scaling in the error exponent regime.  \end{abstract}


\section{Introduction} \label{sec:intro}

{\bf Performance Analysis in Different Regimes.} When we consider
the transmission over a channel $W$ by using a coding scheme, the
parameters of interest are the rate $R$, that represents the amount
of information transmitted per channel use, the block length $N$,
that represents the total number of channel uses, and the block
error probability $P_{\rm e}$. The exact characterization of the
relationship of $R$, $N$, $P_{\rm e}$, and the quality of the
channel $W$ (which can be quantified, e.g., by its capacity $I(W)$
or its Bhattacharyya parameter $Z(W)$) is a formidable task. It is
easier to study the {\em scaling} of these parameters in various
regimes, i.e., by fixing some of these parameters and by considering
the relationship among the remaining parameters.

Concretely, consider the plots in Figure \ref{fig:scalingreg}:
they represent the performance of a family of codes $\mathcal
C$ with rate $R=0.5$. Different curves correspond to codes of different
block lengths $N$. The codes are transmitted over a family of channels
$\mathcal{W}$ parameterized by $z$, that is represented on the horizontal axis. On the vertical axis, we represent the error probability $P_{\rm e}$. The error probability is an increasing function of $z$, which means that the channel gets ``better'' as $z$ decreases. The parameter $z$ indicates the quality of the transmission channel $W$ and, for example, it could be set to $Z(W)$ or to $1-I(W)$. Let us assume that there exists a threshold $z^*$ such that,
if $z< z^*$, then $P_{\rm e}$ tends to $0$ as $N$ grows large, whereas
if $z> z^*$, then $P_{\rm e}$ tends to $1$ as $N$ grows large. For example, if the family of codes $\mathcal C$ is capacity achieving, then we
can think of the threshold $z^*$ as the channel parameter such that
$I(W)=R$. In the example of Figure \ref{fig:scalingreg}, we have
that $z^*=0.5$.

The oldest approach for analyzing the performance of such a family $\mathcal C$ is known as \emph{error exponent}. We pick any channel parameter $z< z^*$. Then, by definition of $z^*$, the error probability tends to $0$ as $N$ grows large. The error exponent quantifies this statement and computes how the error probability varies as a function of the block length. This approach is pictorially represented as the vertical/blue cut in Figure \ref{fig:scalingreg}. The best possible scaling is obtained by considering random codes, that give
\begin{equation*}
P_{\rm e} = e^{-N E(R, W)+o(N)},
\end{equation*}
where $E(R, W)$ is the so-called error exponent \cite{Gal65}. 

Another approach is known as \emph{scaling exponent}. We pick a target error probability $P_{\rm e}$. Then, by definition of $z^*$, the gap between the threshold and the channel parameter $z^*-z$ tends to $0$ as $N$ grows large. The scaling exponent quantifies this statement and computes how the gap to the threshold varies as a function of the block length. This approach is pictorially represented as the horizontal/red cut in Figure \ref{fig:scalingreg}. From a practical viewpoint, we are interested in such a regime, as we typically have a certain requirement on the error probability and look for the shortest code possible for transmitting over the assigned channel. For specific classes of codes, this approach was put forward in \cite{Tillich-cpc00, Mon01b}. As a benchmark, a sequence of works starting from \cite{Do61}, then \cite{St62}, and finally \cite{Ha09, PPV10} shows that the smallest possible block length $N$ required to achieve a gap $z^*-z$ to the threshold with a fixed error probability $P_{\rm e}$ is such that
\begin{equation}\label{eq:randomsc}
N \approx \frac{V(Q^{-1}(P_{\rm e}))^2}{(z^*-z)^2},
\end{equation}
where $Q(\cdot)$ is the tail probability of the standard normal distribution; and $V$ is referred to as channel dispersion and measures the stochastic variability of the channel relative to a deterministic channel with the same capacity. In general, if $N$ is $\Theta(1/(z^*-z)^\mu)$, then we say that the family of codes $\mathcal C$ has scaling exponent $\mu$. Hence, by \eqref{eq:randomsc}, the most favorable scaling exponent is $\mu = 2$ and is achieved by random codes. Furthermore, for a large class of ensembles of LDPC codes and channel models, the scaling exponent is also $\mu = 2$. However, it has to be pointed out that the threshold of such LDPC ensembles does not converge to capacity \cite{AMRU09}.

In summary, in the error exponent regime, we compute how fast $P_{\rm e}$ goes to $0$ as a function of $N$ when $z^*-z$ is fixed; and in the scaling exponent regime, we compute how fast $z^*-z$ goes to $0$ as a function of $N$ when $P_{\rm e}$ is fixed. Then, a natural question is to ask how fast do \emph{both} $P_{\rm e}$ \emph{and} $z^*-z$ go to $0$ as functions of $N$. In other words, we can describe a trade-off between the speed of decay of the error probability and the speed of decay of the gap to capacity as functions of the block length. This intermediate approach is named the \emph{moderate deviations} regime and is studied for random codes in \cite{AW14}.

The last scaling approach we consider concerns the so-called
\emph{error floor} regime. We pick a code of assigned block length $N$ and rate $R$. Then, we compute how the error probability $P_{\rm e}$ behaves as a function of the channel parameter $z$. This corresponds to taking
into account one of the four curves in Figure \ref{fig:scalingreg}. This is a notion that became important when iterative coding schemes were introduced. For such 
schemes, it was observed that frequently the individual curves $P_{\rm
e}(z)$ show an abrupt change of slope, from very steep to very shallow, when going from bad channels to good channels (see, e.g., Figure \ref{fig:floors}). The region where the slope is very
shallow was dubbed the error floor region. More specifically, if we consider a parallel concatenated turbo code, then there is a fixed number of low-weight codewords, regardless of the block length $N$ (see Section 6.9 of \cite{RiU08}). The same behavior can be observed for the ensemble average of LDPC codes, when the minimal variable-node degree is equal to $2$. This means that, in the error floor region, the block error probability is dominated by a term that is independent of $N$ and scales as $z^w$, where $w$ denotes the minimal weight of a non-zero codeword. If the minimal variable-node degree is at least $3$, then the number of low-weight codewords vanishes with $N$ and the block error probability scales as $z^w/N^{w(\mathtt{l}_{\rm min}/2-1)}$. For a more precise statement, see Theorem D.32 in Appendix D of \cite{RiU08}. In this paper, we will show that polar codes have a much more favorable behavior, i.e., the block error probability scales roughly as $z^{\sqrt{N}}$.

\begin{figure}[t] 
\centering 
\includegraphics[width=0.8\columnwidth]{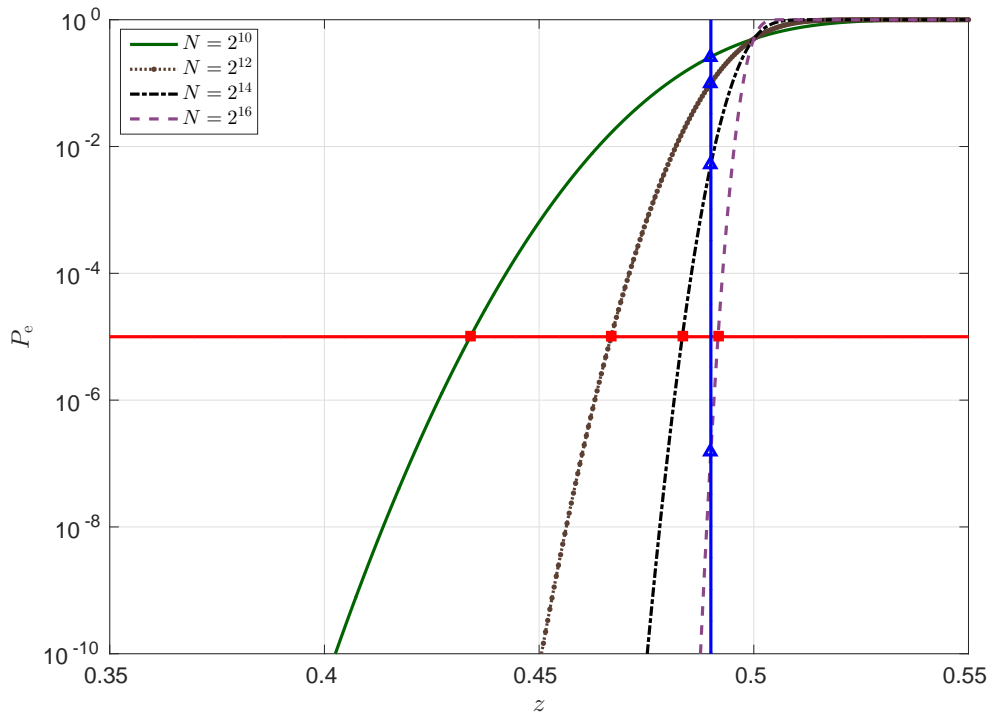}
\caption{Performance of the family of codes $\mathcal C$ with rate $R=0.5$ transmitted over the family of channels $\mathcal{W}$ with threshold $z^*=0.5$. Each curve corresponds to a code of an assigned block length $N$; on the $x$-axis it is represented the channel parameter $z$; and on the $y$-axis the error probability $P_{\rm e}$. The error exponent regime captures the behavior of the blue vertical cuts of fixed channel parameter $z$ (or, equivalently, of fixed gap to threshold $z^*-z$). The scaling exponent regime captures the behavior of the red horizontal cuts of fixed error probability $P_{\rm e}$. The error floor regime captures the behavior of a single curve of fixed block length $N$.} 
\label{fig:scalingreg}
\end{figure} 

\begin{figure}[tb] 
\centering 
\includegraphics[width=0.8\columnwidth]{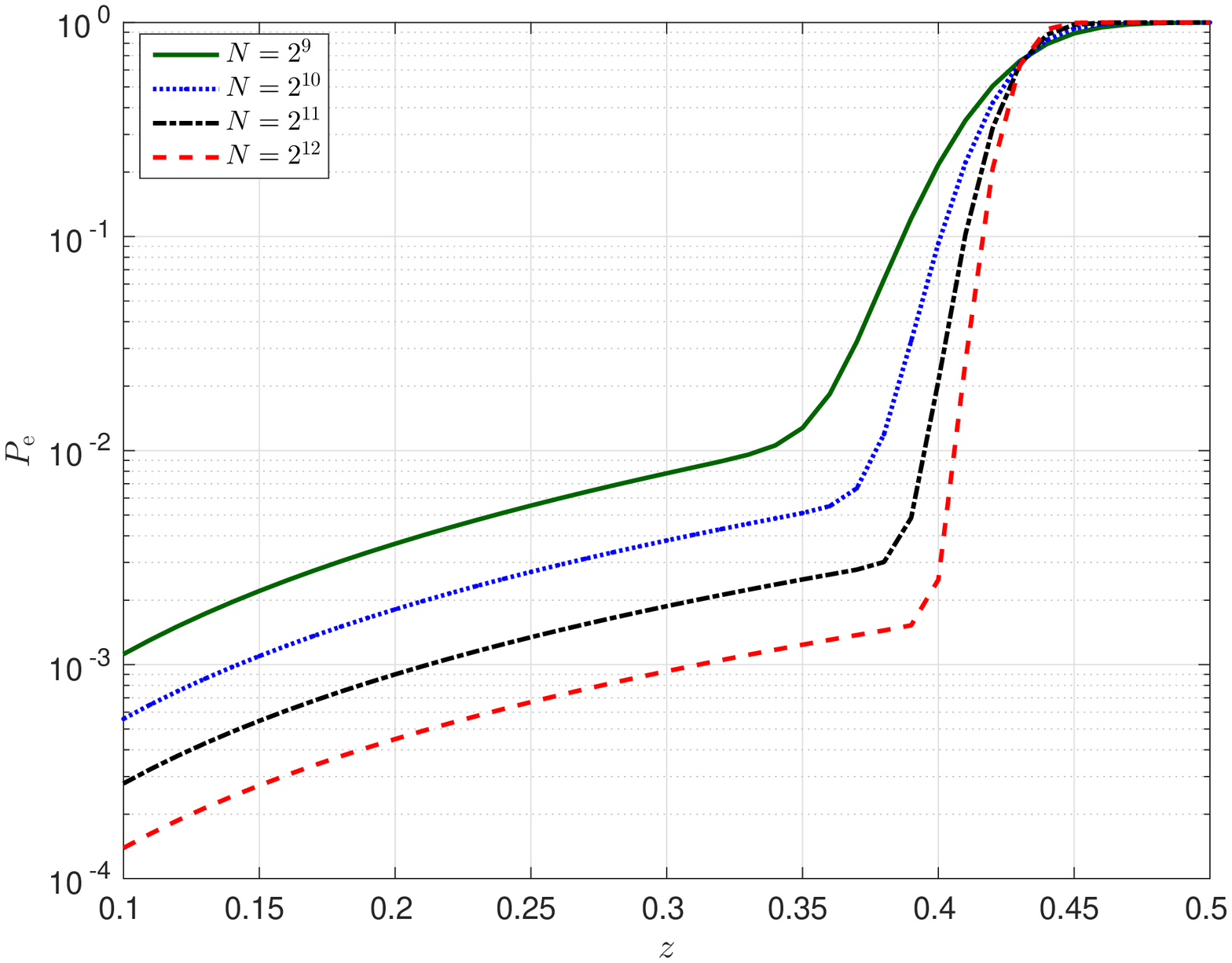}
\caption{Performance of the family of $(3, 6)$-regular LDPC codes transmitted over the binary erasure channel with erasure probability $z$. The waterfall region in which the error probability decreases sharply is clearly distinguishable from the error floor region in which the decay is much slower.} 
\label{fig:floors}
\end{figure} 

{\bf Existing Results for Polar Codes.} Polar codes have attracted the interest of the scientific community, as they provably achieve the capacity of a large class of channels, including any binary memoryless symmetric channel ($\rm BMSC$), with low encoding and decoding complexity. Since their introduction in the seminal paper \cite{Ari09}, the performance of polar codes has been extensively studied in different regimes. 

Concerning the \emph{error exponent} regime, in \cite{ArT09} it is proved that the block error probability under successive cancellation (SC) decoding behaves roughly as $2^{-\sqrt{N}}$. This result is further refined in \cite{HMTU13}, where it is shown that
$\log_2(-\log_2 P_{\rm e})$ scales as $$\displaystyle\frac{\log_2 N}{2}+\displaystyle\frac{\sqrt{\log_2 N}}{2}\cdot
Q^{-1}\left(\displaystyle\frac{R}{C}\right) + o(\sqrt{\log_2 N}).$$ This last result holds both under SC decoding and under optimal MAP decoding. 

Concerning the \emph{scaling exponent}\footnote{In \cite{KMTU10}, the scaling exponent is defined as the value of $\mu$ such that $$\lim_{N\rightarrow \infty, N^{1/\mu}(C-R)=z}P_{\rm e}(N, R, C)=f(z),$$ for some function $f(z)$. However, it is an open question to prove that such a limit exists.} regime, the value of $\mu$ depends on the particular channel taken into account. The authors of \cite{KMTU10} provide a heuristic method for computing the scaling exponent for transmission over the $\rm BEC$ under SC decoding; this method yields $\mu \approx 3.627$. Furthermore, in \cite{XG13} it is shown that the block length scales polynomially fast with the inverse of the gap to capacity, while the error probability is upper bounded by $2^{-N^{0.49}}$. Universal bounds on $\mu$, valid for any $\rm BMSC$ under SC decoding, are presented in \cite{HAU14}: the scaling exponent is lower bounded by $3.579$ and is upper bounded by $6$. In addition, it is conjectured that the lower bound on $\mu$ can be increased up to $3.627$, i.e., up to the value heuristically computed for the $\rm BEC$. The upper bound on $\mu$ is further refined to $5.702$ in \cite{GB14}. As a significant performance gain was obtained by using a successive cancellation list (SCL) decoder \cite{TVa15}, the scaling exponent of list decoders was also studied. However, in \cite{MHU14list-ieeeit} it is proved that the value of $\mu$ does not change by adding a list of any finite size to the MAP decoder. In addition, when transmission takes place over the $\rm BEC$, the scaling exponent stays the same also under genie-aided SC decoding for any finite number of helps from the genie. 

Concerning the \emph{error floor} regime, in \cite{EP13} it is proved that the stopping distance of polar codes scales as $\sqrt{N}$, which implies good error floor performance under belief propagation (BP) decoding. The authors of \cite{EP13} also provide simulation results that show no sign of error floor for transmission over the $\rm BEC$ and over the binary additive white Gaussian noise channel (BAWGNC).

{\bf Contribution of the Present Work.} In this paper, we provide a unified view on the performance analysis of polar codes and present several results about the scaling of the parameters of interest, namely, the rate $R$, the block length $N$, the error probability under SC decoding $P_{\rm e}$, and the quality of the channel $W$. In particular, our contributions address the \emph{scaling exponent}, the \emph{moderate deviations}, and the \emph{error floor} regimes, and we summarize them as follows.

\begin{enumerate}

\item \emph{New universal upper bound on the scaling exponent $\mu$}. We show that $\mu\le 4.714$ for any $\rm BMSC$ and that $\mu \le 3.639$ for the $\rm BEC$. Basically, this result improves by $1$ the previous upper bound valid for any $\rm BMSC$ and approaches closely the value $3.627$ that has been heuristically computed for the $\rm BEC$. The proof technique consists in relating the scaling exponent to the supremum of some function and, then, in describing an interpolation algorithm to obtain a provable upper bound on this supremum. The values $4.714$ for any $\rm BMSC$ and $3.639$ for the $\rm BEC$ are obtained for a particular number of samples used by the algorithm and they can be slightly improved simply by running the algorithm with a larger number of samples.

\item \emph{Moderate deviations: joint scaling of error probability and gap to capacity}. We
unify the two perspectives of the error exponent and the scaling exponent by letting both the gap to capacity $I(W)-R$ and the error probability $P_{\rm e}$ to go to $0$ as functions of the block length $N$. In particular, we describe a trade-off between the speed of decay of $P_{\rm e}$ and the speed of decay of $I(W)-R$. In the limit in which the gap to capacity is arbitrarily small but independent of $N$, this trade-off recovers the result of \cite{ArT09}, where it is shown that $P_{\rm e}$ scales roughly as $2^{-\sqrt{N}}$. 

\item \emph{Absence of error floors}. We prove that polar codes are not affected by error floors. To do so, we consider a polar code of block length $N$ and rate $R$ designed for transmission over a channel $W'$. Then, we look at the performance of this fixed code over other channels $W$ that are ``better'' than $W'$; and we study the error probability $P_{\rm e}$ as a function of the Bhattacharyya parameter $Z(W)$. Note that the code is fixed and the channel varies, which means that we do not choose the optimal polar indices for $W$. In particular, we prove that $P_{\rm e}$ scales roughly as $Z(W)^{\sqrt{N}}$, and this result is in agreement with the error exponent regime.
\end{enumerate}

The rest of the paper is organized as follows. In Section \ref{sec:prel}, we review some preliminary notions about polar coding. In the successive three sections, we describe our original contributions: in Section \ref{sec:scalexp}, we present the new upper bound on the scaling exponent; in Section \ref{sec:joint}, we address the moderate deviations regime; and in Section \ref{sec:floors}, we prove that polar codes are not affected by error floors. In Section \ref{sec:concl}, we conclude the paper with some final remarks.


\section{Preliminaries}\label{sec:prel}

Let $W$ be a $\rm BMSC$, and let $\mathcal{X}=\{0,1\}$ denote its input alphabet, 
$\mathcal{Y}$ the output alphabet, and $\{W(y \mid x) : x\in \mathcal{X}, y\in \mathcal{Y}\}$ the transition probabilities. Denote by $I(W) \in [0,1]$ the mutual information between the input and output of $W$ with uniform distribution on the input. Then, $I(W)$ is also equal to the capacity of $W$. Denote by $Z(W)\in [0,1]$ the Bhattacharyya parameter of $W$, which
is defined as
\begin{align*}
& Z(W)= \sum_{y \in \mathcal{Y}} \sqrt{W(y\mid 0)W(y \mid 1)},
\end{align*}
and it is related to the capacity $I(W)$ via
\begin{align}
&Z(W)+I(W)\ge1,\label{eq:capBatta0}\\ 
&Z(W)^2+I(W)^2\le1, \label{eq:capBatta}
\end{align}
both proved in~\cite{Ari09}. 

The basis of channel polarization consists in mapping two identical copies of the channel $W: \mathcal{X}\to \mathcal{Y}$ into the pair of channels $W^0: \mathcal{X}\to \mathcal{Y}^2$ and $W^1:\mathcal{X}\to \mathcal{X}\times\mathcal{Y}^2$, defined as \cite[Section I-B]{Ari09}, \cite[Section I-B]{HAU14},
\begin{equation}
\begin{split}
W^0(y_1, y_2\mid x_1) & = \sum_{x_2\in \mathcal X} \frac{1}{2}W(y_1\mid x_1 \oplus x_2) W(y_2\mid x_2),\\
W^1(y_1, y_2, x_1\mid x_2) & = \frac{1}{2}W(y_1\mid x_1 \oplus x_2) W(y_2\mid x_2).\\
\end{split}
\end{equation}
Then, the idea is that $W^0$ is a ``worse'' channel and $W^1$ is a ``better'' channel than $W$. This statement can be quantified by computing the relations among the Bhattacharyya parameters of $W$, $W^0$ and $W^1$:
\begin{align}
Z(W)\sqrt{2-Z(W)^2}&\le Z(W^0)\le 2Z(W)-Z(W)^2,\label{eq:minusB}\\
&Z(W^1)=Z(W)^2,\label{eq:plusB}
\end{align}
which follow from Proposition 5 of \cite{Ari09} and from Exercise 4.62 of \cite{RiU08}. In addition, when $W$ is a $\rm BEC$, we have that $W^0$ and $W^1$ are also $\rm BEC$s and, by Proposition 5 of \cite{Ari09}, 
\begin{equation}\label{eq:minusBEC}
Z(W^0)= 2Z(W)-Z(W)^2.
\end{equation}
By repeating this operation $n$ times, we map $2^n$ identical copies of $W$ into the synthetic channels $W_n^{(i)}$ ($i\in \{1, \cdots, 2^n\}$), defined as 
\begin{equation}\label{eq:defWni}
W_n^{(i)} = (((W^{b_1^{(i)}})^{b_2^{(i)}})^{\cdots})^{b_n^{(i)}},
\end{equation}
where $(b_1^{(i)}, \cdots, b_n^{(i)})$ is the binary representation of the integer $i-1$ over $n$ bits. 

Given a $\rm BMSC$ $W$, for $n\in \mathbb N$, define a random sequence of channels $W_n$, as $W_0=W$, and 
\begin{equation}
W_{n} = \left\{ \begin{array}{ll}W_{n-1}^0, & \mbox{ w.p. } 1/2,\\ W_{n-1}^1,& \mbox{ w.p. } 1/2.\\ \end{array}\right.
\end{equation}
Let $Z_n(W)=Z(W_n)$ be the random process that tracks the Bhattacharyya parameter of $W_n$. Then, from \eqref{eq:minusB} and \eqref{eq:plusB} we deduce that, for $n\ge 1$,
\begin{equation}\label{eq:eqBMSC}
Z_{n} \left\{ \begin{array}{ll}\in \left[Z_{n-1}\sqrt{2-Z^2_{n-1}},\, 2 Z_{n-1}-Z^2_{n-1}\right], & \mbox{ w.p. } 1/2,\\ =Z^2_{n-1},& \mbox{ w.p. } 1/2.\\ \end{array}\right.
\end{equation}
When $W$ is a $\rm BEC$ with erasure probability $z$, then the process $Z_n$ has a simple closed form. It starts with $Z_0 = z$, and, by using \eqref{eq:minusB} and \eqref{eq:plusB}, we deduce that, for $n\ge 1$, 
\begin{equation}\label{eq:recBEC}
Z_{n} = \left\{ \begin{array}{ll}2 Z_{n-1}-Z^2_{n-1}, & \mbox{ w.p. } 1/2,\\ Z^2_{n-1},& \mbox{ w.p. } 1/2.\\ \end{array}\right.
\end{equation}

Consider the transmission over $W$ of a polar code of block length $N=2^n$ and rate $R$ and let $P_{\rm e}$ denote the block error probability under SC decoding. Then, by Proposition 2 of \cite{Ari09},
\begin{equation}\label{eq:ubPe}
P_{\rm e} \le \sum_{i\in \mathcal I} Z_n^{(i)},
\end{equation}
where $Z_n^{(i)}$ denotes the Bhattacharyya parameter of $W_n^{(i)}$ and $\mathcal I$ denotes the information set, i.e., the set containing the positions of the information bits.


\section{New Universal Upper Bound on the Scaling Exponent}\label{sec:scalexp}

In this section, we propose an improved upper bound on the scaling exponent that is valid for the transmission over any $\rm BMSC$ $W$.  First of all, we relate the value of the scaling exponent $\mu$ to the supremum of some function. Second, we provide a provable bound on this supremum, which gives us a provably valid choice for $\mu$, i.e., $\mu=4.714$ for any $\rm BMSC$ and $\mu=3.639$ for the $\rm BEC$. More specifically, in Section \ref{subsec:scalstat}, we present the statement and the discussion of these two main theorems. In Sections \ref{subsec:mathpart} and \ref{subsec:designh}, we give the proof of the first and of the second result, respectively.

\subsection{Main Result: Statement and Discussion}\label{subsec:scalstat}

\begin{theorem}[From Eigenfunction to Scaling Exponent]\label{th:scalingexp}
Assume that there exists a function $h(x): [0, 1] \to [0, 1]$ such that $h(0)=h(1)=0$, $h(x)>0$ for any $x\in (0, 1)$, and, for some $\mu > 2$,
\begin{equation}\label{eq:suph}
\displaystyle\sup_{\substack{x\in (0, 1), y \in [x\sqrt{2-x^2}, 2x-x^2]}}\displaystyle\frac{h(x^2)+h(y)}{2h(x)} < 2^{-1/\mu}.
\end{equation}
Consider the transmission over a $\rm BMSC$ $W$ with capacity $I(W)$ by using a polar code of rate $R<I(W)$. Fix $p_{\rm e} \in (0, 1)$ and assume that the block error probability under successive cancellation decoding is \emph{at most} $p_{\rm e}$. Then, it suffices to have a block length $N$ such that 
\begin{equation}\label{eq:scalingexp}
\begin{split}
N &\le \frac{\beta_1}{(I(W)-R)^{\mu}},
\end{split}
\end{equation}
where $\beta_1$ is a universal constant that does not depend on $W$, but only on $p_{\rm e}$. If $W$ is a $\rm BEC$, a less stringent hypothesis on $\mu$ is required for \eqref{eq:scalingexp} to hold. In particular, the condition \eqref{eq:suph} is replaced by 
\begin{equation}\label{eq:suphBEC}
\displaystyle\sup_{x\in (0, 1)}\displaystyle\frac{h(x^2)+h(2x-x^2)}{2h(x)} < 2^{-1/\mu}.
\end{equation}
\end{theorem}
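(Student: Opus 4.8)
The plan is to track the Bhattacharyya process $Z_n$ through the function $h$ and show that $\mathbb{E}[h(Z_n)]$ contracts geometrically at rate $2^{-1/\mu}$, then translate this into a polynomial bound on the block length. Fix the target error probability $p_{\rm e}$. The first step is to choose a threshold $\delta>0$ (depending only on $p_{\rm e}$) and split the analysis into two phases. In the first phase, I would run the polarization process for $n_1$ steps and use the hypothesis \eqref{eq:suph}: by the recursion \eqref{eq:eqBMSC}, for any channel with Bhattacharyya parameter $x=Z_{n-1}$, the conditional expectation satisfies
\begin{equation*}
\mathbb{E}\bigl[h(Z_n)\mid Z_{n-1}=x\bigr] = \frac{1}{2}h(x^2) + \frac{1}{2}h(Z(W^0)) \le \frac{h(x^2)+h(y)}{2} \le 2^{-1/\mu}\, h(x),
\end{equation*}
where $y$ ranges over $[x\sqrt{2-x^2},\,2x-x^2]$ and the last inequality is \eqref{eq:suph} (here I am using that $h$ applied along the admissible interval is dominated by the supremum; some care is needed at the endpoints $x\to 0,1$, where $h(x)\to 0$, but there the claim is vacuous or follows by continuity). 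Iterating, $\mathbb{E}[h(Z_{n_1})] \le 2^{-n_1/\mu}\, h(Z_0) \le 2^{-n_1/\mu}$ since $h\le 1$.

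The second step converts this moment bound into a statement about the fraction of ``good'' indices. Since $h$ is continuous, positive on $(0,1)$, and vanishes only at the endpoints, on any compact subinterval $[\delta, 1-\delta]$ we have $h \ge c_\delta > 0$. Hence $\mathbb{P}(Z_{n_1}\in[\delta,1-\delta]) \le \mathbb{E}[h(Z_{n_1})]/c_\delta \le 2^{-n_1/\mu}/c_\delta$, so the process is, with high probability, either below $\delta$ or above $1-\delta$. One then needs a standard fact (as in \cite{Ari09, ArT09}) that the limiting fraction of indices with $Z_\infty = 0$ equals $I(W)$; combined with the above, the fraction of indices with $Z_{n_1} \le \delta$ is at least $I(W) - 2^{-n_1/\mu}/c_\delta - (\text{a correction of order } \delta)$. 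The correction involving $\delta$ has to be controlled: one uses \eqref{eq:capBatta0} and \eqref{eq:capBatta} to relate $Z_{n_1}$ small to $I(W_{n_1})$ close to $1$, so that the total capacity of the channels with $Z_{n_1}\le\delta$ is at least $I(W) - O(\delta) - 2^{-n_1/\mu}/c_\delta$, and we can afford to keep all of them in the information set. This forces the rate loss: $I(W) - R \ge O(\delta) + 2^{-n_1/\mu}/c_\delta$, i.e. choosing $\delta$ proportional to $I(W)-R$ and $n_1 \approx \mu \log_2 (1/(I(W)-R))$ makes $2^{n_1} \approx (I(W)-R)^{-\mu}$.

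The third step is the second phase: starting from the indices with $Z_{n_1}\le\delta$, run $n_2$ further polarization steps to drive the Bhattacharyya parameters down to the point where \eqref{eq:ubPe} gives $P_{\rm e}\le p_{\rm e}$. Here one uses \eqref{eq:plusB} and the crude bound from \eqref{eq:minusB} ($Z(W^0)\le 2Z(W)$) to show that from $Z\le\delta$ a bounded number $n_2$ of steps (depending only on $p_{\rm e}$, not on $W$) suffices for a constant fraction of descendants to reach Bhattacharyya parameter small enough that their sum is below $p_{\rm e}$; this is the argument of \cite[Section IV]{ArT09} or \cite{XG13}. Since $n_2 = O(1)$, the total block length is $N = 2^{n_1+n_2} \le \beta_1 (I(W)-R)^{-\mu}$ with $\beta_1 = 2^{n_2}/(\text{constants from } \delta, c_\delta)$ depending only on $p_{\rm e}$.

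For the $\rm BEC$, the only change is that by \eqref{eq:minusBEC} the channel $W^0$ has Bhattacharyya parameter exactly $2x-x^2$, so the supremum over the interval in \eqref{eq:suph} collapses to the single value in \eqref{eq:suphBEC}; every other step goes through verbatim (in fact more cleanly, since the $Z_n$ process is a genuine Markov chain on $[0,1]$ by \eqref{eq:recBEC} and the "$Z$ small $\Leftrightarrow$ capacity large" correspondence is exact). The main obstacle I anticipate is the bookkeeping in the second step: carefully quantifying how "fraction of good indices" plus "rate" plus "the $O(\delta)$ capacity defect" combine, and ensuring the constant $c_\delta$ (hence $\beta_1$) is controlled as $\delta \to 0$ at the right rate relative to $I(W)-R$ — this is where the choice of the two time scales $n_1$ and $n_2$ must be balanced, and it is the crux of why the exponent $\mu$ in the final bound is exactly the one appearing in \eqref{eq:suph}.
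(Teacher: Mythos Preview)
Your first step is fine: the hypothesis does give $\mathbb{E}[h(Z_n)\mid Z_{n-1}]\le 2^{-1/\mu}h(Z_{n-1})$, hence $\mathbb{E}[h(Z_n)]\le 2^{-n/\mu}$. The trouble begins at the second step, and it is exactly the obstacle you flag at the end --- but it is not a bookkeeping issue, it is the heart of the theorem, and your decomposition cannot get around it.

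The function $h$ vanishes at $0$ and $1$ (and is not even assumed continuous), so $c_\delta=\inf_{[\delta,1-\delta]}h$ tends to $0$ as $\delta\to 0$ at a rate you do not control. Your Markov bound reads $\mathbb{P}(Z_{n_1}\in[\delta,1-\delta])\le 2^{-n_1/\mu}/c_\delta$. You now face a dichotomy. If $\delta$ is a fixed constant, the capacity defect term $O(\sqrt{\delta})$ (coming from \eqref{eq:capBatta} for the channels with $Z_{n_1}>1-\delta$) forces $I(W)-R$ to be bounded below, so you never reach capacity. If instead $\delta\downarrow 0$ with $I(W)-R$, then $1/c_\delta$ blows up and eats into the exponent: e.g.\ if $h(x)\asymp x^{\eta}$ near $0$, balancing gives block length $\asymp (I(W)-R)^{-\mu(1+\eta')}$ for some $\eta'>0$, not $(I(W)-R)^{-\mu}$. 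This is essentially why earlier work obtained only $\mu+1$; your scheme cannot produce $\mu$. Separately, the claim that the second phase takes $n_2=O(1)$ steps is incorrect: from $Z_{n_1}\le\delta$ you still need each surviving Bhattacharyya parameter below $p_{\rm e}/N$, and that requires $n_2$ growing with $n$.

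The paper avoids both problems by never introducing a fixed threshold $\delta$. It works in a single phase with the $n$-dependent target $Z_n\le p_{\rm e}2^{-n}$, and it replaces $h$ by the perturbed function $g(x)=(1-\delta)h(x)+\delta\,(x(1-x))^{\alpha}$ for a small $\alpha>0$. Lemma~\ref{lm:htoexp} shows that $g$ still contracts at a rate arbitrarily close to $2^{-1/\mu}$ (this is where the assumption on $h$ is actually used: away from the endpoints $h$ dominates and provides the contraction, while near the endpoints $(x(1-x))^{\alpha}$ itself contracts). Because $(x(1-x))^{\alpha}$ has a \emph{known} power-law behavior at the endpoints, Markov against the shrinking threshold $p_{\rm e}2^{-n}$ costs only a factor $2^{n\alpha}$, and Lemma~\ref{lm:exptoBatta} turns this directly into $\mathbb{P}(Z_n\le p_{\rm e}2^{-n})\ge I(W)-c\,2^{-n(\rho-\alpha)}$ with $\rho-\alpha=1/\mu$. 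No second phase is needed, and the exponent $\mu$ survives intact.
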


\begin{theorem}[Valid Choice for Scaling Exponent]\label{th:validchoice}
Consider the transmission over a $\rm BMSC$ $W$ with capacity $I(W)$ by using a polar code of rate $R<I(W)$. Fix $p_{\rm e} \in (0, 1)$ and assume that the block error probability under successive cancellation decoding is \emph{at most} $p_{\rm e}$. Then, it suffices to have a block length $N$ upper bounded by \eqref{eq:scalingexp} with $\mu=4.714$ Furthermore, if $W$ is a $\rm BEC$, then \eqref{eq:scalingexp} holds with $\mu = 3.639$.
\end{theorem}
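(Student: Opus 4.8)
The plan is to invoke Theorem~\ref{th:scalingexp} as a black box, which reduces the statement to a finite search-and-certify problem: it suffices to exhibit one explicit function $h\colon[0,1]\to[0,1]$ with $h(0)=h(1)=0$, $h(x)>0$ on $(0,1)$, and
\begin{equation*}
\sup_{x\in(0,1),\ y\in[x\sqrt{2-x^2},\,2x-x^2]}\frac{h(x^2)+h(y)}{2h(x)}<2^{-1/4.714},
\end{equation*}
and one more function with the same boundary and positivity properties satisfying $\sup_{x\in(0,1)}\frac{h(x^2)+h(2x-x^2)}{2h(x)}<2^{-1/3.639}$ to cover the $\rm BEC$. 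A first useful observation is that the two-dimensional supremum collapses to a one-dimensional one: since neither $h(x^2)$ nor $h(x)$ depends on $y$, it equals $\sup_{x\in(0,1)}\bigl(h(x^2)+\Phi(x)\bigr)/\bigl(2h(x)\bigr)$ with $\Phi(x):=\sup_{y\in[x\sqrt{2-x^2},\,2x-x^2]}h(y)$, and for the piecewise-defined $h$ we will use, $\Phi(x)$ is attained at an endpoint of the $y$-interval or at an interior breakpoint of $h$ inside it, so it is elementary to evaluate; thus the general case is handled with essentially the same machinery as the $\rm BEC$ case.

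For the construction I would minimise, over a finite-dimensional family of candidates, the quotient $\mathcal R(h):=\sup_x\bigl(h(x^2)+\Phi(x)\bigr)/\bigl(2h(x)\bigr)$, which is a Collatz--Wielandt-type ratio for the monotone operator $(Th)(x)=\tfrac12 h(x^2)+\tfrac12\Phi(x)$; its infimum is the ``principal eigenvalue'' of $T$ and the minimiser the corresponding positive ``eigenfunction'', which is exactly the object named in the title of Theorem~\ref{th:scalingexp}. A natural family is continuous functions that are piecewise linear on a grid $0=x_0<x_1<\dots<x_k=1$ in the interior but vanish like $x^{\alpha}$ near $0$ and like $(1-x)^{\alpha'}$ near $1$; the end exponents must be chosen strictly below $1-1/\mu$, because for $h(x)\sim c\,x^{\alpha}$ one computes $\lim_{x\to0^+}\bigl(h(x^2)+\Phi(x)\bigr)/\bigl(2h(x)\bigr)=2^{\alpha-1}$ (and analogously at $x=1$), so a purely linear end segment would already force the quotient up to $1$. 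The nodal values (and the grid, and the end exponents) would be produced by a power-iteration/balancing scheme: start from a reasonable guess, repeatedly apply $T$, renormalise, reinterpolate, and redistribute breakpoints to roughly equalise the local quotient, iterating until the measured value sits comfortably below the target $2^{-1/4.714}$ (respectively $2^{-1/3.639}$). One then \emph{freezes} the resulting $h$, with rational breakpoints and rational data, as the candidate witness.

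The remaining step --- and the only substantial one --- is to certify $\mathcal R(h)<2^{-1/\mu}$ for the frozen $h$ \emph{rigorously}, i.e.\ to turn a numerical supremum into a proof. The approach is to partition $(0,1)$ into finitely many closed subintervals, refined enough that on each of them the three functions $x\mapsto h(x)$, $x\mapsto h(x^2)$ and $x\mapsto\Phi(x)$ are each an explicit, monotone, Lipschitz expression with a computable Lipschitz constant (so the partition must be refined at the breakpoints of $h$ and at their preimages under the maps $x\mapsto x^2$, $x\mapsto x\sqrt{2-x^2}$ and $x\mapsto 2x-x^2$); on each piece one bounds the numerator from above and the denominator from below by the one-sided Lipschitz estimates and checks the resulting rational inequality, which holds with a strict margin because $h$ was designed with slack. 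The two endpoints need separate treatment, since numerator and denominator both vanish there: on a small neighbourhood of $0$ (resp.\ $1$) one uses the exact value of the limiting quotient, $2^{\alpha-1}<2^{-1/\mu}$, together with a monotonicity estimate showing the quotient does not exceed the threshold anywhere on that neighbourhood. For the $\rm BEC$ the identical procedure applies with $\Phi(x)$ replaced by the single value $h(2x-x^2)$, which is the slack that lets a smaller exponent be certified. I expect the main obstacle to be purely organisational: arranging the subinterval decomposition so that it simultaneously respects all of the maps involved, and controlling rounding in the Lipschitz bounds so the finite margin is not consumed; the mathematical content is entirely carried by Theorem~\ref{th:scalingexp}, which we are entitled to assume.
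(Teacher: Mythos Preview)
Your proposal is correct and matches the paper's approach almost point for point: reduce to Theorem~\ref{th:scalingexp}, produce the witness $h$ by power iteration of the operator $T$, glue piecewise-linear interior data to power-law tails $x^{\eta}$ and $(1-x)^{\eta}$ with $\eta$ just below $1-1/\mu$ (for exactly the reason you give, $\lim r(x)=2^{\eta-1}$), and then certify the supremum by a finite interval scan with separate analytic treatment of the two endpoints. The only cosmetic difference is that the paper, having chosen $h$ piecewise linear on $[1/N_{\rm s},1-1/N_{\rm s}]$, bounds the ratio on each subinterval by reading off node values directly (max of $h$ on a segment is at a breakpoint) rather than invoking generic Lipschitz estimates; this is slightly cleaner bookkeeping but the same idea.
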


Before proceeding with the proofs, it is useful to discuss two points. The first remark focuses on the role of the function $h(x)$ and heuristically explains why the value of the scaling exponent is linked to the existence of a function that fulfills condition \eqref{eq:suph} (condition \eqref{eq:suphBEC} for the $\rm BEC$). The second remark points out that we can let the error probability to tend to 0 polynomially fast in $N$ and maintain the same scaling between gap to capacity and block length.

\begin{rmk}[Heuristic Interpretation of Function $h(x)$]
First, let $W$ be a $\rm BEC$ and consider the linear operator $T_{\rm BEC}$ defined as
\begin{equation}\label{eq:polarop}
T_{\rm BEC}(g) = \frac{g(z^2)+g(2z-z^2)}{2},
\end{equation}
where $g(z)$ is a bounded and real valued function over $[0, 1]$. The relation between the Bhattacharyya process $Z_n$ and the operator $T_{\rm BEC}$ is given by
\begin{equation}
{\mathbb E}\left[g(Z_n)\mid Z_0 = z\right] =\overbrace{T_{\rm BEC} \circ T_{\rm BEC} \circ \cdots \circ T_{\rm BEC}(g)}^{n \mbox{ }{\rm times}} = T_{\rm BEC}^n (g),
\end{equation}
where the formula comes from a straightforward application of \eqref{eq:recBEC}. A detailed explanation of the dynamics of the functions $T_{\rm BEC}^n (g)$ is provided in Section III of \cite{HAU14}. In short, a simple check shows that $\lambda =1$ is an eigenvalue of the operator $T_{\rm BEC}$ with eigenfunctions $v_0(z)=1$ and $v_1(z)=z$. Let $\lambda^*$ be the largest eigenvalue of $T_{\rm BEC}$ other than $\lambda = 1$, and define $\mu^*$ as $\mu^* = -1/\log_2 \lambda^* $. Then, the heuristic discussion of \cite{HAU14} leads to the fact that $\mu^*$ is the largest candidate that we could plug in \eqref{eq:suphBEC}. For this choice, the function $h(x)$ represents the eigenfunction associated with the eigenvalue $\lambda^*$, namely,
\begin{equation}
\frac{h(x^2)+h(2x-x^2)}{2} = 2^{-1/\mu^*}h(x).
\end{equation}
A numerical method for the calculation of this second eigenvalue was originally proposed in \cite{KMTU10} and yields $\mu^* = 3.627$. Furthermore, in Section III of \cite{HAU14}, it is also heuristically explained how $\mu^*=3.627$ gives a lower bound to the scaling exponent of the $\rm BEC$.

Now, let $W$ be a $\rm BMSC$ and consider the operator $T_{\rm BMSC}$ defined as
\begin{equation}\label{eq:polaropgen}
T_{\rm BMSC}(g) = \displaystyle\sup_{y \in [x\sqrt{2-x^2}, 2x-x^2]}\frac{g(z^2)+ g(y)}{2}.
\end{equation}
Note that, differently from $T_{\rm BEC}$, the operator $T_{\rm BMSC}$ is not linear as it involves taking a supremum.
The relation between the Bhattacharyya process $Z_n$ and the operator $T_{\rm BMSC}$ is given by
\begin{equation} \label{T_BMSC}
{\mathbb E}\left[g(Z_n)\mid Z_0 = z\right] \le T_{\rm BMSC}^n (g),
\end{equation}
where the formula comes from a straightforward application of \eqref{eq:eqBMSC}. Similarly, as in the case of the $\rm BEC$, $\lambda =1$ is an eigenvalue of $T_{\rm BMSC}$, and we write the largest eigenvalue other than $\lambda =1$ as $2^{-1/\mu^*}$. Then, the idea is that $\mu^*$ is the largest candidate that we could plug in \eqref{eq:suph} and, for this choice, the function $h(x)$ represents the eigenfunction associated with the eigenvalue $2^{-1/\mu^*}$, namely, 
\begin{equation}\label{eq:eigenfungen}
\displaystyle\sup_{y \in [x\sqrt{2-x^2}, 2x-x^2]}\frac{h(x^2)+ h(y)}{2}=2^{-1/\mu^*}h(x).
\end{equation}
In Section IV of \cite{HAU14}, it is proved that the scaling exponent $\mu$ is upper bounded by $6$. This result is obtained by showing that the eigenvalue is at least $2^{-1/5}$, i.e. $\mu^*\le 5$, and that $\mu^*+1$ is an upper bound on the scaling exponent $\mu$. Furthermore, it is conjectured that $\mu^*$ is a tighter upper bound on the scaling exponent $\mu$. In \cite{GB14}, a more refined computation of $\mu^*$ is presented, which yields $\mu^*\le 4.702$, hence $\mu \le 5.702$. In this paper, we solve the conjecture of \cite{HAU14} by proving that, indeed, $\mu^*$ is an upper bound on the scaling exponent $\mu$. In addition, we show an algorithm that guarantees a \emph{provable} bound on the eigenvalue, thus obtaining $\mu\le 4.714$ for any $\rm BMSC$ and $\mu\le 3.639$ for the $\rm BEC$. We finally note from \eqref{T_BMSC} that  $T_{\rm BMSC}$ provides only an upper bound on the (expected) evolution of $Z_n$. As a result, although $\mu\le 4.714$ holds universally for any channel, this bound is certainly not tight if we consider a specific $\rm BMSC$. 
\end{rmk}

\begin{rmk}[Polynomial Decay of $P_{\rm e}$] \label{rmk:poly} With some more work, it is possible to prove the following generalization of Theorem \ref{th:scalingexp}. Assume that there exists $h(x)$ as in Theorem \ref{th:scalingexp} and consider the transmission over a $\rm BMSC$ $W$ with capacity $I(W)$ by using a polar code of rate $R<I(W)$. Then, for any $\nu>0$, the block length $N$ and the block error probability under successive cancellation decoding $P_{\rm e}$ are such that 
\begin{equation}\label{eq:jointPeN}
\begin{split}
P_{\rm e} &\le \frac{1}{N^\nu},\\
N &\le \frac{\beta_2}{(I(W)-R)^{\mu}},
\end{split}
\end{equation}
where $\beta_2$ is a universal constant that does not depend on the channel $W$. A sketch of the proof of this statement is given at the end of Section \ref{subsec:mathpart}. The result \eqref{eq:jointPeN} is a generalization of Theorem \ref{th:scalingexp} in the sense that, instead of being an assigned constant, the error probability goes to 0 polynomially fast in $1/N$, and the scaling between block length and gap to capacity, i.e., the value of $\mu$, stays the same. On the contrary, as described in Section \ref{sec:joint}, if the error probability is $O(2^{-N^\beta})$ for some $\beta\in (0, 1/2)$, then the scaling between block length and gap to capacity changes and depends on the exponent $\beta$.
\end{rmk}

\subsection{From Eigenfunction to Scaling Exponent: Proof of Theorem \ref{th:scalingexp}}\label{subsec:mathpart}

The proof of Theorem \ref{th:scalingexp} relies on the following two auxiliary results: Lemma \ref{lm:exptoBatta}, proved in Appendix \ref{app:exptoBatta}, relates the number of synthetic channels with a Bhattacharyya parameter small enough to an expected value over the Bhattacharyya process; and Lemma \ref{lm:htoexp}, proved in Appendix \ref{app:htoexp}, relates the expected value over the Bhattacharyya process to the function $h(x)$.

\begin{lemma}[From Expectation to Scaling Exponent] \label{lm:exptoBatta}
Let $Z_n(W)$ be the Bhattacharyya process associated with the channel $W$. Pick any $\alpha\in (0, 1)$ and assume that, for $n \ge 1$ and for some $\rho \le 1/2$,
\begin{equation}\label{eq:hpexBatta}
{\mathbb E}\left[(Z_n(1-Z_n))^{\alpha}\right] \le c_1 2^{-n\rho},
\end{equation}
where $c_1$ is a constant that does not depend on $n$. Then,
\begin{equation}\label{eq:gaplemma}
{\mathbb P}\left(Z_n \le p_{\rm e}\hspace{0.1em} 2^{-n}\right)\ge I(W)- c_2 \hspace{0.1em} 2^{-n(\rho-\alpha)},
\end{equation}
where $c_2= \sqrt{2 \hspace{0.1em}p_{\rm e}}+2\hspace{0.1em}c_1\hspace{0.1em}p_{\rm e}^{-\alpha}$. 
\end{lemma}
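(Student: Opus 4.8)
To prove the lemma, the plan is to bring the \emph{capacity process} $I_n:=I(W_n)$ into play alongside the Bhattacharyya process. Two facts about $I_n$ are needed. First, by the conservation of mutual information under the polar transform, $I(W^0)+I(W^1)=2\,I(W)$ (see \cite{Ari09}); iterating this and recalling that $W_n$ is uniformly distributed over $\{W_n^{(i)}\}_{i=1}^{2^n}$ gives $\mathbb{E}[I_n]=I(W)$ for every $n$. Second, applying \eqref{eq:capBatta} to $W_n$ yields $I_n\le\sqrt{1-Z_n^2}$, so a synthetic channel whose Bhattacharyya parameter is close to $1$ carries essentially no capacity. The idea is then to split the identity $\mathbb{E}[I_n]=I(W)$ over the three events
\[
A_n=\{Z_n\le p_{\rm e}2^{-n}\},\qquad M_n=\{p_{\rm e}2^{-n}<Z_n<1-p_{\rm e}2^{-n}\},\qquad B_n=\{Z_n\ge 1-p_{\rm e}2^{-n}\},
\]
show that the contributions of the ``middle'' set $M_n$ and of the ``bad'' set $B_n$ are small, and conclude that $\mathbb{P}(A_n)$ (which dominates the contribution of $A_n$, since $I_n\le 1$) is at least $I(W)$ minus a vanishing term.

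First I would bound $\mathbb{P}(M_n)$ using hypothesis \eqref{eq:hpexBatta}. Since $z\mapsto z(1-z)$ is concave, on $M_n$ its value is at least that at the endpoints of the interval, so $Z_n(1-Z_n)\ge p_{\rm e}2^{-n}(1-p_{\rm e}2^{-n})\ge \tfrac12\,p_{\rm e}2^{-n}$, where the last step uses $n\ge 1$ and $p_{\rm e}<1$ (so that $p_{\rm e}2^{-n}<1/2$). Markov's inequality together with \eqref{eq:hpexBatta} then gives
\[
\mathbb{P}(M_n)\ \le\ \frac{\mathbb{E}\bigl[(Z_n(1-Z_n))^{\alpha}\bigr]}{(p_{\rm e}2^{-n}/2)^{\alpha}}\ \le\ \frac{c_1\,2^{-n\rho}}{(p_{\rm e}/2)^{\alpha}\,2^{-n\alpha}}\ =\ c_1\,(2/p_{\rm e})^{\alpha}\,2^{-n(\rho-\alpha)}\ \le\ 2\,c_1\,p_{\rm e}^{-\alpha}\,2^{-n(\rho-\alpha)},
\]
the final inequality because $2^{\alpha}\le 2$ for $\alpha\in(0,1)$.

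Next I would estimate the contribution of $B_n$: on this event, \eqref{eq:capBatta} gives $I_n\le\sqrt{1-Z_n^2}\le\sqrt{1-(1-p_{\rm e}2^{-n})^2}\le\sqrt{2\,p_{\rm e}2^{-n}}=\sqrt{2p_{\rm e}}\,2^{-n/2}$. Writing $I(W)=\mathbb{E}[I_n]=\mathbb{E}[I_n\,\mathbbm{1}_{A_n}]+\mathbb{E}[I_n\,\mathbbm{1}_{M_n}]+\mathbb{E}[I_n\,\mathbbm{1}_{B_n}]$ and bounding $I_n\le 1$ on $A_n\cup M_n$, one gets $I(W)\le\mathbb{P}(A_n)+\mathbb{P}(M_n)+\sqrt{2p_{\rm e}}\,2^{-n/2}$. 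Finally, since $\rho\le 1/2$ and $\alpha>0$ imply $\rho-\alpha\le 1/2$, we have $2^{-n/2}\le 2^{-n(\rho-\alpha)}$, and combining this with the bound on $\mathbb{P}(M_n)$ yields
\[
\mathbb{P}(A_n)\ \ge\ I(W)-\bigl(2\,c_1\,p_{\rm e}^{-\alpha}+\sqrt{2p_{\rm e}}\bigr)2^{-n(\rho-\alpha)}\ =\ I(W)-c_2\,2^{-n(\rho-\alpha)},
\]
which is exactly \eqref{eq:gaplemma}.

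The Markov estimate and the constant bookkeeping are routine; the only genuinely important step is the first one — recognizing that the Bhattacharyya process alone does not record that the ``good'' synthetic channels occupy a fraction $I(W)$, and that one must therefore introduce $I_n$ and use \emph{both} the conservation identity $\mathbb{E}[I_n]=I(W)$ and the capacity--Bhattacharyya bound \eqref{eq:capBatta}. It is also worth noting that the conclusion is informative only when $\rho>\alpha$: if $\rho\le\alpha$ the right-hand side of \eqref{eq:gaplemma} may be non-positive and the bound is vacuous, so nothing needs to be checked separately in that regime.
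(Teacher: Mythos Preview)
Your proof is correct and reaches exactly the same constant $c_2=\sqrt{2p_{\rm e}}+2c_1p_{\rm e}^{-\alpha}$, but it gets there by a different---and somewhat cleaner---route than the paper.

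The paper does not use the martingale identity $\mathbb{E}[I_n]=I(W)$. Instead it introduces, for each of the three regions, the limiting ``fraction that will eventually polarize to $0$'':
\[
A'=\liminf_{m\to\infty}\mathbb{P}\bigl(Z_n\in[0,p_{\rm e}2^{-n}),\,Z_{n+m}\le 2^{-m}\bigr),
\]
and analogously $B',C'$, so that $A'+B'+C'=I(W)$ by the polarization theorem. It then bounds $B'\le B=\mathbb{P}(M_n)$ by the same Markov argument you use, and bounds $C'$ by observing that the conditional limiting fraction equals the capacity of a channel with Bhattacharyya parameter in $(1-p_{\rm e}2^{-n},1]$, whence $C'\le\sqrt{2p_{\rm e}2^{-n}}$ via \eqref{eq:capBatta}. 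Finally $A\ge A'=I(W)-B'-C'$ gives the result.

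The two arguments are parallel in structure (same three-region split, same Markov bound on the middle region, same use of \eqref{eq:capBatta} for the top region), but differ in how $I(W)$ enters. Your approach needs only the one-step conservation $I(W^0)+I(W^1)=2I(W)$ and the pointwise bound $I_n\le\sqrt{1-Z_n^2}$; the paper instead appeals to the full polarization/convergence statement $\lim_m\mathbb{P}(Z_{n+m}\le 2^{-m})=I(W)$. Your route is therefore more self-contained and avoids the $\liminf$ machinery; the paper's route, on the other hand, makes the connection to polarization explicit and sets up the generalization \eqref{eq:gappoly} (replacing $p_{\rm e}2^{-n}$ by $2^{-n(\nu+1)}$) in a way that is slightly more transparent.
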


\begin{lemma}[From Eigenfunction to Expectation] \label{lm:htoexp}
Let $h(x): [0, 1] \to [0, 1]$ such that $h(0)=h(1)=0$, $h(x)>0$ for any $x\in (0, 1)$, and 
\begin{equation}\label{eq:suphrho}
\displaystyle\sup_{\substack{x\in (0, 1), y \in [x\sqrt{2-x^2}, 2x-x^2]}}\displaystyle\frac{h(x^2)+h(y)}{2h(x)} \le 2^{-\rho_1}.
\end{equation}
for some $\rho_1 \le 1/2$. Let $Z_n(W)$ be the Bhattacharyya process associated with the channel $W$. Pick any $\alpha \in (0, 1)$. Then, for any $\delta \in (0, 1)$, and for $n\in \mathbb N$,
\begin{equation}\label{eq:expZn}
{\mathbb E}\left[(Z_n(1-Z_n))^{\alpha}\right] \le \frac{1}{\delta}\left(2^{-\rho_1}+\sqrt{2}\frac{\delta}{1-\delta} c_3\right)^n,
\end{equation}
with $c_3$ defined as 
\begin{equation}\label{eq:defa}
c_3 = \sup_{x \in (\epsilon_1(\alpha), 1-\epsilon_2(\alpha))}\frac{(x(1-x))^\alpha}{h(x)},
\end{equation}
where $\epsilon_1(\alpha)$, $\epsilon_2(\alpha)$ denote the only two solutions in $[0, 1]$ of the equation
\begin{equation}\label{eps-solution}
\frac{1}{2}\left( \bigl(x(1+x)\bigr)^\alpha + \bigl((2-x)(1-x)^{1/3}\bigr)^\alpha \right)= 2^{-\rho_1}.
\end{equation} 
If $W$ is a $\rm BEC$, a less stringent hypothesis on $\rho_1$ is required for \eqref{eq:expZn} to hold. In particular, the condition \eqref{eq:suphrho} is replaced by
\begin{equation}\label{eq:suphrhoBEC}
\displaystyle\sup_{x\in (0, 1)}\displaystyle\frac{h(x^2)+h(2x-x^2)}{2h(x)}\le 2^{-\rho_1}.
\end{equation}
\end{lemma}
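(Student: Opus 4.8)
The plan is to run a one-step drift argument for the random process $Z_n$ on the mixed Lyapunov function $V(x)=(x(1-x))^\alpha+\lambda\,h(x)$, where $\lambda>0$ is a free constant to be fixed at the very end so as to match the claimed bound. Write $q=2^{-\rho_1}$ and $g(x)=(x(1-x))^\alpha$. Since $V\ge g$ and $V(x)\le(1/4)^\alpha+\lambda\le 1+\lambda$ on $[0,1]$, it suffices to establish a contraction $\mathbb E[V(Z_n)\mid Z_{n-1}=x]\le q'\,V(x)$ valid for every $x\in[0,1]$, with $q'=q+\sqrt2\,c_3/\lambda$; iterating down to $Z_0=z$ then gives $\mathbb E[g(Z_n)]\le\mathbb E[V(Z_n)]\le(q')^n\,V(z)$, and the choice $\lambda=(1-\delta)/\delta$ turns this into $\tfrac1\delta\big(2^{-\rho_1}+\sqrt2\tfrac{\delta}{1-\delta}c_3\big)^n$, since then $q'=2^{-\rho_1}+\sqrt2\tfrac{\delta}{1-\delta}c_3$ and $V(z)\le(1/4)^\alpha+\tfrac{1-\delta}{\delta}\le\tfrac1\delta$.

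First I would establish the one-step behaviour of $g$. Conditioned on $Z_{n-1}=x$, \eqref{eq:eqBMSC} gives, with probability $1/2$, $Z_n=x^2$, for which $g(x^2)=(x(1+x))^\alpha g(x)$ is an exact identity; and with probability $1/2$, $Z_n$ lies in $[x\sqrt{2-x^2},\,2x-x^2]$, on which I claim $g(y)\le\big((2-x)(1-x)^{1/3}\big)^\alpha g(x)$ uniformly. Bounding $y\le x(2-x)$ and $1-y\le 1-x\sqrt{2-x^2}$ and multiplying (both factors being nonnegative) reduces this claim to the elementary one-variable inequality $1-x\sqrt{2-x^2}\le(1-x)^{4/3}$ on $[0,1]$, which I would verify directly, e.g.\ using the identity $x\sqrt{2-x^2}=\sqrt{1-(1-x^2)^2}$ and comparing the two sides (they agree, and are tangent, at $x=0$). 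Hence $\mathbb E[g(Z_n)\mid Z_{n-1}=x]\le\phi(x)\,g(x)$ with $\phi(x):=\tfrac12\big((x(1+x))^\alpha+((2-x)(1-x)^{1/3})^\alpha\big)$, which is exactly the left-hand side of \eqref{eps-solution}. For a $\rm BEC$ the bad branch is the deterministic value $2x-x^2$, and since $1-(2x-x^2)=(1-x)^2\le(1-x)^{4/3}$ the same bound on $\phi$ holds a fortiori.

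Next I would combine this with the $h$-side. Hypothesis \eqref{eq:suphrho} (resp.\ \eqref{eq:suphrhoBEC} for the $\rm BEC$) gives $\mathbb E[h(Z_n)\mid Z_{n-1}=x]\le\tfrac12\big(h(x^2)+\sup_{y}h(y)\big)\le q\,h(x)$. Now split on the position of $x$. Since $\phi$ is continuous with $\phi(0)=\phi(1)=2^{\alpha-1}\le 2^{-\rho_1}=q$ (which holds when $\alpha\le1-\rho_1$; otherwise the interval $(\epsilon_1,1-\epsilon_2)$ reaches an endpoint where $h=0$, so $c_3=+\infty$ and \eqref{eq:expZn} is vacuous) and since $\epsilon_1$ and $1-\epsilon_2$ are the only two roots of $\phi=q$, we get $\phi(x)\le q$ on $[0,\epsilon_1]\cup[1-\epsilon_2,1]$ and $\phi(x)>q$ on $(\epsilon_1,1-\epsilon_2)$. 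On the former set, $\mathbb E[V(Z_n)\mid Z_{n-1}=x]\le\phi(x)g(x)+\lambda q\,h(x)\le q\,V(x)$. On the latter set we use $g(x)\le c_3\,h(x)$, which is the definition of $c_3$, together with the crude bound $\phi(x)\le\sqrt2$ (from $x(1+x)+(2-x)(1-x)^{1/3}\le2\sqrt2$ on $[0,1]$ and concavity of $u\mapsto u^\alpha$), obtaining $\mathbb E[V(Z_n)\mid Z_{n-1}=x]\le\sqrt2\,c_3\,h(x)+\lambda q\,h(x)=(\sqrt2\,c_3+\lambda q)\,h(x)\le\big(q+\sqrt2\,c_3/\lambda\big)V(x)$, where the last step uses $h(x)\le V(x)/\lambda$. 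Combining the two cases gives the contraction with $q'=q+\sqrt2\,c_3/\lambda$, and iterating as in the first paragraph completes the proof.

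I expect the one genuine technical point to be the branch inequality $1-x\sqrt{2-x^2}\le(1-x)^{4/3}$: it is precisely this inequality that dictates the otherwise mysterious exponent $1/3$ appearing in \eqref{eps-solution} (a larger exponent fails in a neighbourhood of $x=0$, where the two sides are tangent), and it must hold on all of $[0,1]$, not just asymptotically. Everything else — the choice of the composite potential $V=g+\lambda h$ that blends the crude comparison $g\le c_3 h$ on the problematic interval $(\epsilon_1,1-\epsilon_2)$ with the genuine contraction of $g$ outside it, the dichotomy coming from the two roots of $\phi=q$, and the bookkeeping producing the constants $\sqrt2$, $\tfrac{\delta}{1-\delta}$ and $\tfrac1\delta$ — is routine once $\phi$ is in hand.
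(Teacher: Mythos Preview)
Your approach is essentially identical to the paper's: your mixed potential $V=g+\lambda h$ with $\lambda=(1-\delta)/\delta$ is a scalar multiple of the paper's function $(1-\delta)h(x)+\delta(x(1-x))^\alpha$, and the branch inequality $1-x\sqrt{2-x^2}\le(1-x)^{4/3}$, the drift function $\phi$, and the dichotomy on the two roots of $\phi=q$ all appear verbatim there. The only cosmetic difference is that the paper first reduces WLOG to $\alpha<\min(1/2,\rho_1/\log_2(4/3))$ (using that $\mathbb E[(Z_n(1-Z_n))^\alpha]$ is decreasing in $\alpha$), which lets it obtain $\phi\le 2^\alpha\le\sqrt2$ from the trivial bound on each summand and lets it verify concavity of $\phi$ explicitly rather than relying on the ``only two roots'' clause in the statement.
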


At this point, we are ready to put everything together and prove Theorem \ref{th:scalingexp}. 

\begin{proof}[Proof of Theorem \ref{th:scalingexp}]
Let us define
\begin{equation}\label{eq:defrho1}
\rho_1 = \min \left( \frac{1}{2}, -\log_2\displaystyle\sup_{\substack{x\in (0, 1), y \in [x\sqrt{2-x^2}, 2x-x^2]}}\displaystyle\frac{h(x^2)+h(y)}{2h(x)}\right),
\end{equation}
where $h(x)$ is the function of the hypothesis.  

Set 
\begin{equation}\label{eq:defalpha}
\alpha = \log_2 \left( 1+\frac{2^{-1/\mu}-2^{-\rho_1}}{2^{-1/\mu}+2^{-\rho_1}} \right).
\end{equation}
By using \eqref{eq:suph} and the fact that $\mu > 2$, we immediately realize that $2^{-1/\mu}-2^{-\rho_1} > 0$, hence that $\alpha > 0$. In addition, it easy to check that $\alpha <1$. 

Set  
\begin{equation}\label{eq:defdelta}
\delta = \frac{2^{-1/\mu}-2^{-\rho_1}}{2\sqrt{2}c_3+2^{-1/\mu}-2^{-\rho_1}},
\end{equation}
where $c_3$ is defined as in \eqref{eq:defa}. Since $2^{-1/\mu}-2^{-\rho_1} > 0$, we have that $\delta \in (0, 1)$.

In addition, $\rho_1 \le 1/2$ and the condition \eqref{eq:suphrho} clearly follows from the definition \eqref{eq:defrho1}. Consequently, we can apply Lemma \ref{lm:htoexp}, which yields formula \eqref{eq:expZn}.

Set
\begin{equation}\label{eq:defrho}
\rho = -\log_2 \left( 2^{-\rho_1} + \sqrt{2} \frac{\delta}{1-\delta}c_3\right).
\end{equation}
Then, $\rho\le \rho_1\le 1/2$, and we can apply Lemma \ref{lm:exptoBatta} with $c_1=1/\delta$, which yields
\begin{equation}\label{eq:finres}
{\mathbb P}\left(Z_n \le p_{\rm e}\hspace{0.1em} 2^{-n}\right)\ge I(W)- c_2 \hspace{0.1em} 2^{-n(\rho-\alpha)} = I(W)- c_2 \hspace{0.1em} 2^{-n/\mu},
\end{equation}
where $c_2 = \sqrt{2 \hspace{0.1em}p_{\rm e}}+2\hspace{0.1em}p_{\rm e}^{-\alpha}/\delta$ and the last equality uses the definitions \eqref{eq:defrho}, \eqref{eq:defalpha} and \eqref{eq:defdelta}. 

Consider the transmission of a polar code of block length $N=2^n$ and rate $R = I(W)- c_2 \hspace{0.1em} 2^{-n/\mu}$ over $W$. Then, by combining \eqref{eq:ubPe} and \eqref{eq:finres}, we have that the error probability under successive cancellation decoding is upper bounded by $p_{\rm e}$. Therefore, the result \eqref{eq:scalingexp} follows with $\beta_1  = c_2^{\mu}$. 

A similar proof holds for the specific case in which $W$ is a $\rm BEC$.  

\end{proof}

Now, let us briefly sketch how to prove the result stated in Remark \ref{rmk:poly}. First, we need to generalize Lemma \ref{lm:exptoBatta} by showing that, under the same hypothesis \eqref{eq:hpexBatta}, we have that, for any $\nu > 0$,
\begin{equation}\label{eq:gappoly}
{\mathbb P}\left(Z_n \le 2^{-n(\nu+1)}\right)\ge I(W)- c_4 \hspace{0.1em} 2^{-n(\rho-(\nu+1)\alpha)},
\end{equation}
where $c_4= \sqrt{2}+2\hspace{0.1em}c_1$. Then, we simply follow the procedure described in the proof of Theorem \ref{th:scalingexp} with the difference that $\alpha$ is a factor $1+\nu$ smaller than in \eqref{eq:defalpha}.

\subsection{Valid Choice for Scaling Exponent: Proof of Theorem \ref{th:validchoice}}\label{subsec:designh}

Let $W$ be a $\rm BMSC$. The proof of Theorem \ref{th:validchoice} consists in providing a good candidate for the function $h(x): [0, 1]\to [0, 1]$ such that $h(0)=h(1)=0$, $h(x)>0$ for any $x\in (0, 1)$ and \eqref{eq:suph} is satisfied with a value of $\mu$ as small as possible. In particular, we will prove that $\mu = 4.714$ is a valid choice. 

The idea is to apply repeatedly the operator $T_{\rm BMSC}$ defined in \eqref{eq:polaropgen}, until we converge to the function $h(x)$. Hence, let us define $h_k(x)$ recursively for any $k\ge 1$ as
\begin{align}
h_k(x) &= \frac{f_k(x)}{\sup_{y\in (0, 1)}f_k(y)},\label{eq:normstep}\\ 
f_k(x) &= \displaystyle\sup_{y\in [x\sqrt{2-x^2}, 2x-x^2]}\frac{h_{k-1}(x^2)+h_{k-1}(y)}{2},\label{eq:defstep}
\end{align}
with some initial condition $h_0(x)$ such that $h_0(0)=h_0(1)=0$ and $h_0(x)>0$ for any $x\in (0, 1)$. Note that the normalization step \eqref{eq:normstep} ensures that the function $h_k(x)$ does not tend to the constant function $0$ in the interval $[0, 1]$.

However, even if we choose some simple initial condition $h_0(x)$, the sequence of functions $\{h_k(x)\}_{k\in \mathbb N}$ is analytically intractable. Hence, we need to resort to numerical methods, keeping in mind that we require a \emph{provable} upper bound for any $x\in (0, 1)$ on the function
\begin{equation}\label{eq:ratioh}
r(x) = \sup_{y\in [x\sqrt{2-x^2}, 2x-x^2]} \frac{h(x^2)+h(y)}{2h(x)}.
\end{equation}
To do so, first we construct an adequate candidate for the function $h(x)$. This function will depend on some auxiliary parameters. Then, we describe an algorithm to analyze this candidate and present a choice of the parameters that gives $\mu=4.714$. Let us underline that, despite that the procedure is numerical, the resulting upper bound and the value of $\mu$ are rigorously \emph{provable}. 

For the construction part, we observe numerically that, when $k$ is large enough, the function $h_k(x)$ depends weakly on the initial condition $h_0(x)$, and it does not change much after one more iteration, i.e., $h_{k+1}(x)\approx h_k(x)$. In addition, let us point out that the goal is \emph{not} to obtain an exact approximation of the sequence of functions $\{h_k(x)\}_{k\in \mathbb N}$ defined in \eqref{eq:normstep}-\eqref{eq:defstep}. Indeed, the actual goal is to obtain a candidate $h(x)$ that satisfies \eqref{eq:suph} with a value of $\mu$ as low as possible. 

Pick a large integer $N_{\rm s}$ and let us define the sequence of functions $\{\hat{h}_k(x)\}_{k\in \mathbb N}$ as follows. For any $k\in \mathbb N$, $\hat{h}_k(x)$ is the piece-wise linear function obtained by linear interpolation from the samples $\hat{h}_k(x_i)$, where $x_i = i/N_{\rm s}$ for $i \in \{0, 1, \cdots, N_{\rm s}\}$. The samples $\hat{h}_k(x_i)$ are given by
\begin{equation}\label{eq:rechk}
\begin{split}
\hat{h}_k(x_i) &= \frac{\hat{f}_k(x_i)}{\max_{j\in \{0, 1, \cdots, N_{\rm s}\}}\hat{f}_k(x_j)},\\
\hat{f}_k(x_i) &= \frac{\hat{h}_{k-1}((x_i)^2)+\max_{j\in \{0, 1, \cdots, M_{\rm s}\}}\hat{h}_{k-1}(y_{i, j})}{2},\\
\end{split}
\end{equation}
where $M_{\rm s}$ is a large integer, and, for $j \in \{0, 1, \cdots, M_{\rm s}\}$, $y_{i, j}$ is defined as 
\begin{equation}
y_{i, j} = x_i \sqrt{2-x_i^2}+\frac{j}{M_{\rm s}}x_i\left(2-x_i-\sqrt{2-x_i^2}\right).
\end{equation}
The initial samples $\hat{h}_0(x_i)$ are obtained by evaluating at the points $\{x_i\}_{i=0}^{N_{\rm s}}$ some function $h_0(x)$ such that $h_0(0)=h_0(1)=0$ and $h_0(x)>0$ for any $x\in (0, 1)$ (see Figure \ref{fig:eigenBMSC} for a plot of $\hat{h}_0(x)$ and $\hat{h}_k(x)$). 

\begin{figure}[tb] 
\centering 
\includegraphics[width=0.8\columnwidth]{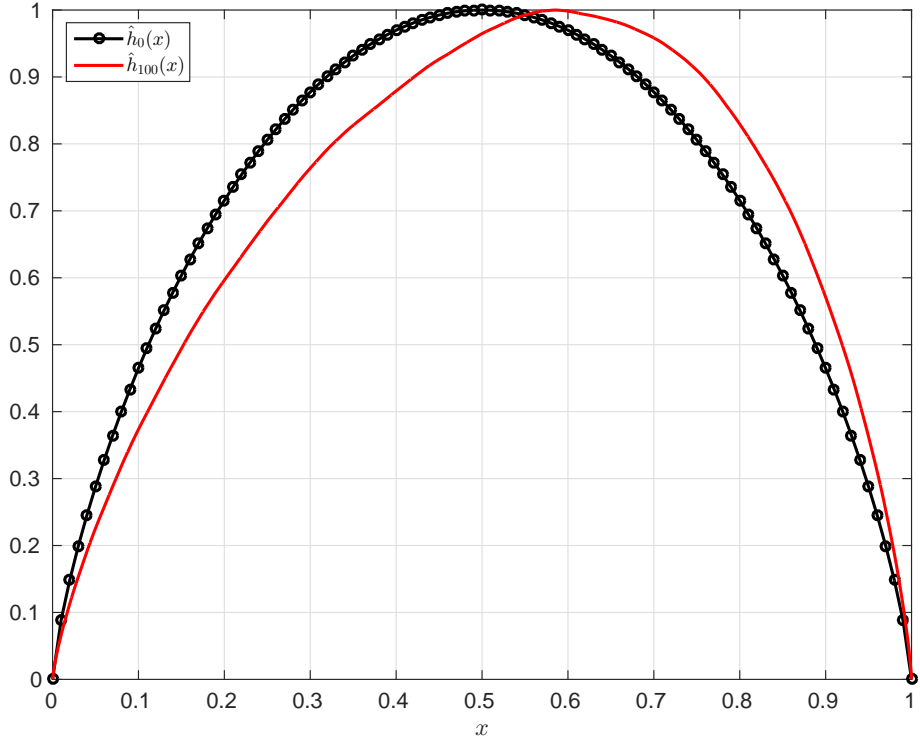}
\caption{Plot of $\hat{h}_0(x)$ (black circles) and $\hat{h}_k(x)$ (red line) after $k=100$ steps of the recursion \eqref{eq:rechk} with $N_{\rm s}=10^6$, $M_{\rm s}=10^4$, and the initial condition $f_0(x) = (x(1-x))^{3/4}$.} 
\label{fig:eigenBMSC}
\end{figure} 

It is clear that, by increasing $N_{\rm s}$ and $M_{\rm s}$, we obtain a better approximation of the sequence of functions \eqref{eq:normstep}-\eqref{eq:defstep}. In addition, by increasing $k$ we get closer to the limiting function $\lim_{k\to\infty} \hat{h}_k(x)$. Set
\begin{equation}
\hat{r}_k = \max_{i\in \{1, \cdots, N_{\rm s}-1\}}\frac{\hat{h}_{k}((x_i)^2)+\max_{j\in \{0, 1, \cdots, M_{\rm s}\}}\hat{h}_{k}(y_{i, j})}{2\hat{h}_k(x_i)}.
\end{equation}
We observe from numerical simulations that, when $k$ increases, the sequence $\hat{r}_k$ tends to the limiting value $0.86275$ for any $k$. Furthermore, this limit depends very weakly on the particular choice of the initial conditions $\{\hat{h}_0(x_i)\}_{i = 0}^{N_{\rm s}}$.

Note that, by using the samples $\{\hat{h}_k(x_i)\}_{i = 0}^{N_{\rm s}}$, $\hat{r}_k$ gives an indication of the smallest value of $\mu$ that we could hope for, i.e., $\mu = -1/\log_2 0.86275 =4.695$. Indeed, if we obtain $h(x)$ by interpolating the samples $\{\hat{h}_k(x_i)\}_{i = 0}^{N_{\rm s}}$, then $\hat{r}_k= \max_{i\in \{1, \cdots, N_{\rm s}-1\}} r(i/N_{\rm s})$, where $r(x)$ is defined in \eqref{eq:ratioh}. Therefore, $\hat{r}_k\le \sup_{x\in (0, 1)} r(x)$, i.e., $\hat{r}_k$ is a lower bound on the desired supremum, whereas we are looking for an upper bound to that quantity.

Fix a large integer $\bar{k}$ and, before computing a provable upper bound on $\sup_{x\in (0, 1)} r(x)$, let us describe the interpolation method for obtaining the candidate $h(x)$ from the samples $\{\hat{h}_{\bar{k}}(x_i)\}_{i = 0}^{N_{\rm s}}$. 

For $x$ close to $0$ and for $x$ close to $1$, linear interpolation does not yield a good candidate $h(x)$. Indeed, assume that $h(x)=\hat{h}_{\bar k}(x)$ for $x\in \left[0, 1/N_{\rm s}\right]$. Then, $\lim_{x\to 0^+} r(x) = 1$, hence $\sup_{x\in (0, 1)} r(x) \ge 1$. Similarly, if $h(x)=\hat{h}_{\bar k}(x)$ for $x\in \left[1- 1/N_{\rm s}, 1\right]$, then $\lim_{x\to 1^-} r(x) = 1$. On the contrary, if $h(x)$ grows as $x^\eta$ in a neighborhood of $0$ for $\eta\in (0, 1)$, then, it is easy to see that $\lim_{x\to 0^+} r(x) = 2^{\eta-1}$. Similarly, if $h(x)$ grows as $(1-x)^\eta$ in a neighborhood of $1$ for $\eta\in (0, 1)$, then $\lim_{x\to 1^-} r(x)=2^{\eta-1}$. Consequently, the idea is to choose $\eta$ slightly smaller than $1-1/4.695$, where $4.695$ constitutes a good approximation of the target value of $\mu$ that we want to achieve. Based on this observation, we set
\begin{align}
b_0(x) &= \hat{h}_{\bar k}\left(\frac{\bar{m}}{N_{\rm s}}\right)\hspace{0.1em} \left(\frac{\bar{m}}{N_{\rm s}}\right)^{-\eta} \hspace{0.1em} x^{\eta},\label{eq:defb0}\\
b_1(x) &= \hat{h}_{\bar k}\left(1-\frac{\bar{m}}{N_{\rm s}}\right)\hspace{0.1em} \left(\frac{\bar{m}}{N_{\rm s}}\right)^{-\eta} \hspace{0.1em} (1-x)^{\eta},\label{eq:defb1}
\end{align}
for some integer $\bar{m}\ge 2$. Then, sample $b_0(x)$ for $x\in \left[1/N_{\rm s},\bar{m}/N_{\rm s}\right]$, sample $\hat{h}_{\bar k}(x)$ for $x\in \left[\bar{m}/N_{\rm s},1-\bar{m}/N_{\rm s}\right]$, and sample $b_1(x)$ for $x\in \left[1-\bar{m}/N_{\rm s},1-1/N_{\rm s}\right]$. Note that it is better to not have a uniform sampling, but to choose the number of samples according to the rule that follows. Pick some $\delta_{\rm s}$ small enough. Then, for each couple of consecutive samples, the bigger one has to be at most a factor $1+\delta_{\rm s}$ larger than the smaller one. 

Let $\{x'_i\}_{i=1}^{N'_{\rm s}}$ denote the set of sampling positions and $\{\hat{h}_i\}_{i=1}^{N'_{\rm s}}$ denote the set of samples obtained with this procedure, where $N'_{\rm s}$ is the number of such samples. Eventually, we define the candidate $h(x)$ as
\begin{equation}
h(x) = \left\{\begin{array}{ll} b_0(x), & \mbox{ for } x \in \left[0, \displaystyle\frac{1}{N_{\rm s}}\right],\\
\\
b_1(x) & \mbox{ for } x \in \left[1- \displaystyle\frac{1}{N_{\rm s}}, 1\right],\\
\end{array}\right. 
\end{equation}
and, for $x\in \left[1/N_{\rm s}, 1-1/N_{\rm s}\right]$, $h(x)$ is obtained by linear interpolation from the samples $\{\hat{h}_i\}$. 

Concerning the analysis of $h(x)$, keep in mind that the goal is to find a provable upper bound on $\sup_{x\in (0, 1)} r(x)$. First, consider the values of $x$ in a neighborhood of $0$. The following chain of inequalities holds for any $x\in \left[0, 1/N_{\rm s}\right]$,
\begin{equation}\label{eq:ub0}
\begin{split}
r(x) &\stackrel{\mathclap{\mbox{\footnotesize(a)}}}{\le} \frac{h(x^2)+h(2x)}{2h(x)} \\
&\stackrel{\mathclap{\mbox{\footnotesize(b)}}}{\le} \frac{b_0(x^2)+b_0(2x)}{2b_0(x)} \\
&\stackrel{\mathclap{\mbox{\footnotesize(c)}}}{=} \frac{x^{\eta}}{2}+2^{\eta-1}\\
&\le  H_0 \triangleq\frac{(N_{\rm s})^{-\eta}}{2}+2^{\eta-1},
\end{split}
\end{equation}
where the inequality (a) uses that $h(y)\le h(2x)$ for any $y\in [x\sqrt{2-x^2}, 2x-x^2]$, as $h(x)$ is increasing for $x\in \left[0, 2/N_{\rm s}\right]$; the inequality (b) uses that $h(x)= b_0(x)$ for $x\in \left[0, 1/N_{\rm s}\right]$ and $h(x)\le b_0(x)$ for $x\in \left[1/N_{\rm s}, 2/N_{\rm s}\right]$, as, in that interval, $h(x)$ is the linear interpolation of samples taken from $b_0(x)$ and $b_0(x)$ is concave for any $\eta \in (0, 1)$; and the equality (c) uses the definition \eqref{eq:defb0} of $b_0(x)$. 

Second, consider the values of $x$ is a neighborhood of $1$. The following chain of inequalities holds for any $x\in \left[1- 1/N_{\rm s}, 1\right]$,
\begin{equation}
\begin{split}\label{eq:ub1}
r(x) &\stackrel{\mathclap{\mbox{\footnotesize(a)}}}{\le} \frac{h(x^2)+h(x\sqrt{2-x^2})}{2h(x)} \\
&\stackrel{\mathclap{\mbox{\footnotesize(b)}}}{\le} \frac{b_1(x^2)+b_1(x\sqrt{2-x^2})}{2b_1(x)} \\
&\stackrel{\mathclap{\mbox{\footnotesize(c)}}}{=} \frac{(1+x)^{\eta}}{2}+\frac{1}{2}\left(\frac{1-x\sqrt{2-x^2}}{1-x}\right)^{\eta}\\
&\stackrel{\mathclap{\mbox{\footnotesize(d)}}}{\le} H_1 \triangleq 2^{\eta-1} + \frac{1}{2}\left(N_{\rm s}-(N_{\rm s}-1)\sqrt{1+\frac{2}{N_{\rm s}}-\frac{1}{(N_{\rm s})^2}}\right)^{\eta}, 
\end{split}
\end{equation}
where the inequality (a) uses that $h(y)\le h(x\sqrt{2-x^2})$ for any $y\in [x\sqrt{2-x^2}, 2x-x^2]$,  as $h(x)$ is decreasing for $x\in \left[1-1/N_{\rm s}, 1\right]$; the inequality (b) uses that $h(x)= b_1(x)$ for $x\in \left[1-1/N_{\rm s}, 1\right]$ and $h(x)\le b_1(x)$ for $x\in \left[1/N_{\rm s}, 2/N_{\rm s}\right]$, as, in that interval, $h(x)$ is the linear interpolation of samples taken from $b_1(x)$ and $b_1(x)$ is concave for any $\eta \in (0, 1)$; the equality (c) uses the definition \eqref{eq:defb1} of $b_1(x)$; and the inequality (d) uses that $(1-x\sqrt{2-x^2})(1-x)^{-1}$ is decreasing for any $x\in (0, 1)$.

Finally, consider the values of $x$ in the interval $\left[1/N_{\rm s}, 1-1/N_{\rm s}\right]$. For any $i\in \{1, \cdots, N'_{\rm s}-1\}$, define
\begin{equation*}
\begin{split}
J^{+}_{i} &= \{j : x'_j\in [(x'_i)^2, (x'_{i+1})^2]\},\\
J^{-}_{i} &= \{j : x'_j\in [x'_i\sqrt{2-(x'_i)^2}, 2x'_{i+1}-(x'_{i+1})^2]\}.
\end{split}
\end{equation*}
Then, as $h(x)$ is piece-wise linear in the interval $\left[1/N_{\rm s}, 1-1/N_{\rm s}\right]$, we have that, for any $x\in [x'_i, x'_{i+1}]$, 
\begin{equation*}
\begin{split}
h(x)&\ge \min\left(h(x'_i), h(x'_{i+1})\right),\\
h(x^2)&\le h^{+}_i \triangleq \max\left(h\left((x'_i)^2\right), h\left((x'_{i+1})^2\right), \max_{j\in J^{+}_{i}}\left(h(x'_j)\right)\right),\\
\sup_{y\in [x\sqrt{2-x^2}, 2x-x^2]} h(y)&\le  h^{-}_i \triangleq\max\left(h\left(x'_i\sqrt{2-(x'_i)^2}\right), h\left(2x'_{i+1}-(x'_{i+1})^2\right), \max_{j\in J^{-}_{i}}\left(h(x'_j)\right)\right),
\end{split}
\end{equation*}
which implies that, for any $x\in [x'_i, x'_{i+1}]$,
\begin{equation}\label{eq:ubmiddle}
r(x)\le \frac{ h^{+}_i + h^{-}_i}{2\min\left(h(x'_i), h(x'_{i+1})\right)}.
\end{equation}

As a result, by combining \eqref{eq:ub0}, \eqref{eq:ub1}, and \eqref{eq:ubmiddle}, we conclude that 
\begin{equation}\label{eq:concls}
\sup_{x\in (0, 1)} r(x)\le \max\left(H_0, H_1, \max_{i\in \{1, \cdots, N'_{\rm s}-1\}}\frac{ h^{+}_i + h^{-}_i}{2\min\left(h(x'_i), h(x'_{i+1})\right)}\right), 
\end{equation}
which implies that \eqref{eq:suph} holds for any $\mu$ such that $2^{-1/\mu}$ is an upper bound on the RHS of \eqref{eq:concls}. 

Let us choose $\delta_{\rm s}$, $\eta$, the sampling positions $\{x'_i\}_{i=1}^{N'_{\rm s}}$, and the samples $\{\hat{h}_i\}_{i=1}^{N'_{\rm s}}$ to be rational numbers. Then, the RHS of \eqref{eq:concls} is the maximum of either rational numbers or sums of rational powers of rational numbers. Consequently, we can provide a provable upper bound on the RHS of \eqref{eq:concls}, hence on $\mu$. In particular, by setting $N_{\rm s}=10^6$, $M_{\rm s}=10^4$, $f_0(x) = (x(1-x))^{3/4}$, $k=100$, $\delta_{\rm s}=10^{-4}$, $\eta =78/100$, and $\bar{m}=13$, we obtain $\mu= 4.714$.

For the $\rm BEC$, the idea is to apply repeatedly the operator $T_{\rm BEC}$ defined in \eqref{eq:polarop}. Hence, by adapting the procedure described above and by setting $N_{\rm s}=10^6$, $M_{\rm s}=10^4$, $f_0(x) = (x(1-x))^{2/3}$, $k=100$, $\delta_{\rm s}=10^{-4}$, $\eta =72/100$, and $\bar{m}=5$, we obtain $\mu= 3.639$ (see Figure \ref{fig:eigenBEC} for a plot of $\hat{h}_0(x)$ and $\hat{h}_k(x)$).

\begin{figure}[tb] 
\centering 
\includegraphics[width=0.8\columnwidth]{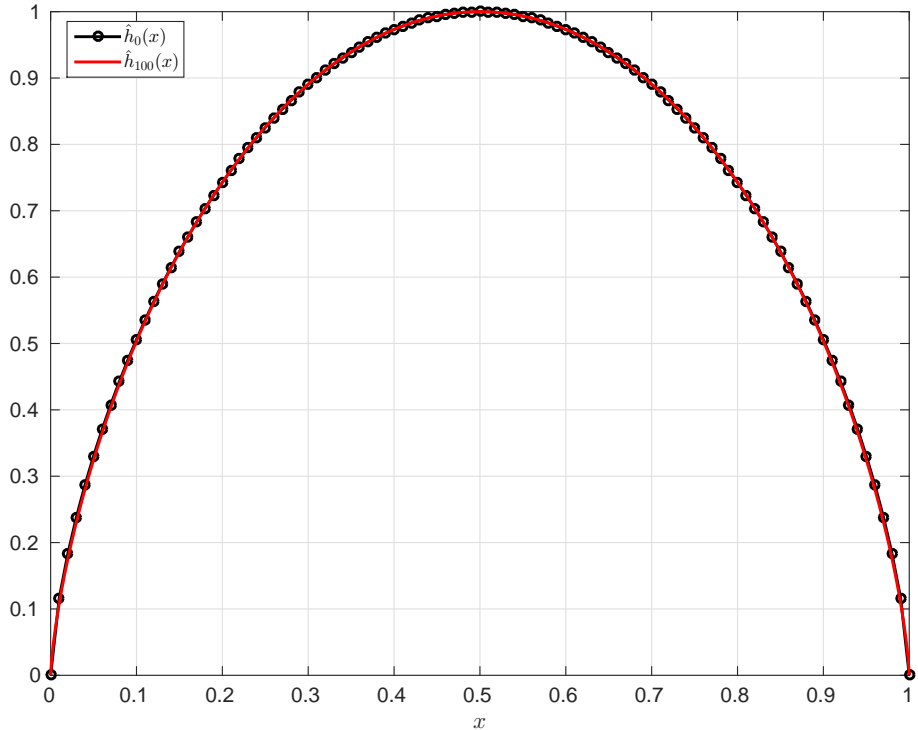}
\caption{Plot of $\hat{h}_0(x)$ (black circles) and $\hat{h}_k(x)$ (red line) after $k=100$ steps of the recursion obtained by applying the operator $T_{\rm BEC}$ defined in \eqref{eq:polarop} with $N_{\rm s}=10^6$, $M_{\rm s}=10^4$, and the initial condition $f_0(x) = (x(1-x))^{2/3}$. Differently from Figure \ref{fig:eigenBMSC}, in this case $\hat{h}_{100}(x)$ remains symmetric and very similar to the initial condition $\hat{h}_0(x)$.} 
\label{fig:eigenBEC}
\end{figure}


\section{Moderate Deviations: Joint Scaling of Error Probability and Gap to Capacity}\label{sec:joint}

The scaling exponent describes how fast the gap to capacity, as a function of the block length, tends to $0$, when the error probability is fixed. Hence, it is natural to ask how fast the gap to capacity, as a function of the block length, tends to $0$, when the error probability tends at a certain speed to $0$. The discussion of Remark \ref{rmk:poly} in Section \ref{subsec:scalstat} points out that we can let the error probability go to 0 polynomially fast in $N$, and maintain the same scaling exponent. In this section, we show that, if we allow a less favorable scaling between gap to capacity and block length (i.e. a larger scaling exponent), then the error probability goes to $0$ sub-exponentially fast in $N$. More specifically, in Section \ref{subsec:resjoint} we present the exact statement of this result together with some remarks, and in Section \ref{subsec:proofjoint} we give the proof. 

\subsection{Main Result: Statement and Discussion}\label{subsec:resjoint}

\begin{theorem}[Joint Scaling: Exponential Decay of $P_{\rm e}$]\label{th:unifexp}
Assume that there exists a function $h(x)$ that satisfies the hypotheses of Theorem \ref{th:scalingexp} for some $\mu > 2$. Consider the transmission over a $\rm BMSC$ $W$ with capacity $I(W)$ by using a polar code of rate $R<I(W)$. Then, for any $\gamma \in \left(1/(1+\mu), 1\right)$, the block length $N$ and the block error probability under successive cancellation decoding $P_{\rm e}$ are such that
\begin{equation}\label{eq:jointPeN2}
\begin{split}
P_{\rm e} &\le N\cdot 2^{-\scriptstyle N^{\scriptstyle \gamma\cdot h_2^{(-1)}\left(\frac{\gamma(\mu+1)-1}{\gamma\mu}\right)}},\\
N &\le \frac{\beta_3}{(I(W)-R)^{\mu/(1-\gamma)}},
\end{split}
\end{equation}
where $\beta_3$ is a universal constant that does not depend on $W$ or on $\gamma$, and $h_2^{(-1)}$ is the inverse of the binary entropy function defined as $h_2(x)=-x\log_2 x-(1-x)\log_2 (1-x)$ for any $x\in [0, 1/2]$. If $W$ is a $\rm BEC$, the less stringent hypothesis \eqref{eq:suphBEC} on $\mu$ is required for \eqref{eq:jointPeN2} to hold.
\end{theorem}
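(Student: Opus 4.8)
The plan is a two-phase analysis of the Bhattacharyya process $Z_n$. Write $n=\log_2 N$ and split the $n$ polarization steps into a ``rate phase'' of length $n_1$ and a ``reliability phase'' of length $n_2=n-n_1$, with $n_1\approx(1-\gamma)n$ and hence $n_2\approx\gamma n$; the rate phase will control the gap to capacity and the reliability phase the decay of $P_{\rm e}$, and both the threshold $\gamma=1/(1+\mu)$ and the exponent $\gamma\,h_2^{(-1)}\!\left(\tfrac{\gamma(\mu+1)-1}{\gamma\mu}\right)$ will emerge from balancing the two contributions.

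For the rate phase I would apply Lemmas~\ref{lm:htoexp} and~\ref{lm:exptoBatta}, in the more flexible form of the generalization~\eqref{eq:gappoly}, to obtain that, for a suitable threshold exponent $a$ and suitable auxiliary parameters $\alpha,\delta$,
\begin{equation*}
{\mathbb P}\bigl(Z_{n_1}\le 2^{-a n_1}\bigr)\ge I(W)-c\,2^{-n_1/\mu},
\end{equation*}
where $a$ is taken large enough that $a n_1$ exceeds $n_2$ by a controlled margin (concretely $a\gtrsim(\gamma-\theta)/(1-\gamma)$, with $\theta$ the target reliability exponent below). For the reliability phase I would condition on a trajectory with $Z_{n_1}=z_0\le 2^{-a n_1}$ and exploit~\eqref{eq:eqBMSC}: over each of the next $n_2$ steps the process either squares $Z$ (probability $1/2$) or at most doubles it (probability $1/2$). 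Letting $S$ be the number of squaring steps among steps $n_1+1,\dots,n$ and analysing the worst-case ordering (all doublings first, then all squarings), one gets $-\log_2 Z_n\ge 2^{S}\bigl(a n_1-n_2+S\bigr)$; in particular, as soon as $S\ge\theta n$ and $a n_1-n_2+\theta n\ge 1$, we have $Z_n\le 2^{-2^{\theta n}}=2^{-N^{\theta}}$.

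It remains to choose $\theta$ and to bound ${\mathbb P}(S<\theta n)$. Since $S\sim\mathrm{Binomial}(n_2,1/2)$ and $\theta n/n_2<1/2$ for the $\theta$ we will pick, the standard lower-tail estimate gives ${\mathbb P}(S<\theta n)\le 2^{-n_2(1-h_2(\theta n/n_2))}$, and, because the two phases depend on disjoint coin flips,
\begin{equation*}
{\mathbb P}\bigl(Z_n\le 2^{-N^{\theta}}\bigr)\ge I(W)-c\,2^{-n_1/\mu}-2^{-n_2(1-h_2(\theta n/n_2))}.
\end{equation*}
With $n_2\approx\gamma n$ the tail parameter is $\theta/\gamma$, and requiring the reliability term to decay at least as fast as the rate term $2^{-n_1/\mu}=2^{-(1-\gamma)n/\mu}$ amounts to $\gamma\bigl(1-h_2(\theta/\gamma)\bigr)\ge(1-\gamma)/\mu$, i.e.\ $h_2(\theta/\gamma)\le\tfrac{\gamma(\mu+1)-1}{\gamma\mu}$; the largest admissible value is exactly $\theta=\gamma\,h_2^{(-1)}\!\left(\tfrac{\gamma(\mu+1)-1}{\gamma\mu}\right)$, which is positive precisely when $\gamma>1/(1+\mu)$. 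For this $\theta$ both error terms are $O\!\bigl(N^{-(1-\gamma)/\mu}\bigr)$, so the set $\mathcal I=\{\,i:Z_n^{(i)}\le 2^{-N^{\theta}}\,\}$ has size at least $N\bigl(I(W)-O(N^{-(1-\gamma)/\mu})\bigr)$; choosing it as the information set and invoking~\eqref{eq:ubPe} gives a polar code with $P_{\rm e}\le|\mathcal I|\cdot 2^{-N^{\theta}}\le N\cdot 2^{-N^{\theta}}$, and rearranging $I(W)-R=O(N^{-(1-\gamma)/\mu})$ yields $N\le\beta_3/(I(W)-R)^{\mu/(1-\gamma)}$. For the $\rm BEC$ the same argument applies with the linear operator $T_{\rm BEC}$ and the weaker hypothesis~\eqref{eq:suphBEC} in place of~\eqref{eq:suph}, since~\eqref{eq:recBEC} has the same ``square or (at most) double'' structure.

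The hard part will be the bookkeeping of the constants, in particular keeping the prefactor under control uniformly in $\gamma$. Because the reliability phase needs $a n_1$ to dominate $n_2$, the threshold exponent $a$ must grow like $1/(1-\gamma)$ as $\gamma\to 1$; plugging this into~\eqref{eq:gappoly} forces the parameter $\alpha$ of Lemma~\ref{lm:htoexp} to shrink like $1-\gamma$, and one must check that the resulting growth of $c_3$ (hence of $c$ above) is still compatible with the claimed scaling law --- which is consistent with the fact that as $\gamma\to 1$ the bound on $N$ becomes vacuous and the statement degenerates into the error-exponent estimate $P_{\rm e}\approx 2^{-\sqrt N}$ of~\cite{ArT09}. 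A secondary technical point is that $n_1,n_2$ are integers, so the approximations $n_1=(1-\gamma)n$, $n_2=\gamma n$ introduce $O(1)$ additive errors in the exponents that must be absorbed into $\beta_3$; and one should verify that the worst-case-ordering inequality $-\log_2 Z_n\ge 2^{S}(a n_1-n_2+S)$ stays valid throughout the doubling subphase, which it does because $Z$ remains strictly below $1$ and $a n_1>n_2$ keeps the bracket positive.
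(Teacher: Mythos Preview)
Your two-phase split and the balancing of exponents match the paper's proof: a rate phase of length $\approx(1-\gamma)n$ using the scaling-exponent machinery, a reliability phase of length $\approx\gamma n$, a binomial lower-tail bound on the number $S$ of squaring steps, and the choice $h_2^{(-1)}\bigl((\gamma(\mu+1)-1)/(\gamma\mu)\bigr)$ for the tail parameter to equate the two error terms. The substantive difference is in the reliability phase. You run a deterministic worst-case-ordering argument giving $-\log_2 Z_n\ge 2^S(an_1-n_2+S)$, which is useful only if the process enters the phase with $Z_{n_1}\le 2^{-an_1}$ and $a n_1>n_2-S$; this forces $a\gtrsim\gamma/(1-\gamma)$. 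The paper instead invokes Lemma~22 of~\cite{HAU14}, which (writing $n_0$ for the rate-phase length) gives ${\mathbb P}\bigl(Z_{n_0+m}\le 2^{-2^{\sum_{i=1}^{m} B_i}}\mid Z_{n_0}=x\bigr)\ge 1-c_6\,x(1-\log_2 x)$. The key is that this bound depends only on $\sum B_i$, not on the length $m$ of the reliability phase, so the rate phase only needs to deliver $Z_{n_0}\le 2^{-n_0}$ (i.e.\ $a=1$) via the plain Lemma~\ref{lm:exptoBatta}, not its generalization~\eqref{eq:gappoly}.

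That difference is exactly the ``hard part'' you flag, and it is a genuine obstacle for your route. Pushing $a\sim 1/(1-\gamma)$ through~\eqref{eq:gappoly} forces $\alpha$ to shrink like $1-\gamma$; then $\epsilon_1(\alpha),\epsilon_2(\alpha)\to 0$ in~\eqref{eq:defa} and $c_3\to\infty$ (since $h$ vanishes at the endpoints). Tracking the dependence, your prefactor $c$ blows up like $2^{C/(1-\gamma)}$, and after raising to the power $\mu/(1-\gamma)$ to extract $\beta_3$ you get growth like $2^{C\mu/(1-\gamma)^2}$, which cannot be absorbed into a $\gamma$-independent constant. The paper sidesteps this by keeping $a=1$ throughout: its constants $c_5,c_6,c_7$ are fixed once $h$ and $\mu$ are. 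Your worst-case ordering is simply too pessimistic here --- it does not exploit that, with high probability, interleaved doublings are far less damaging than ``all doublings first,'' which is precisely what Lemma~22 captures.
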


In short, formula \eqref{eq:jointPeN2} describes a trade-off between gap to capacity and error probability as functions of the block length $N$. Recall from Remark \ref{rmk:poly} that, if the scaling exponent is the $\mu$ given by Theorem \ref{th:validchoice}, then the error probability decays polynomially fast in $1/N$. Theorem \ref{th:unifexp} goes one step further and proves that, in order to have a faster decay of the error probability, e.g., a sub-exponential decay, it suffices to take a larger scaling exponent. 

More specifically, let $\gamma$ go from $1/(1+\mu)$ to $1$. On the one hand, the error probability goes faster and faster to $0$, since the exponent $\gamma\cdot h_2^{(-1)}\left((\gamma(\mu+1)-1)/(\gamma\mu)\right)$ is increasing in $\gamma$; on the other hand, the gap to capacity goes slower to $0$, since the exponent $\mu/(1-\gamma)$ is increasing in $\gamma$.

Before proceeding with the proof, it is useful to discuss three points. The first remark concerns the possible choices for $\mu$ in \eqref{eq:jointPeN2}. The second remark shows how to recover from Theorem \ref{th:unifexp} the result \cite{ArT09} concerning the error exponent regime. The third remark adds the Bhattacharyya parameter $Z(W)$ to the picture outlined in Theorem \ref{th:unifexp} and, in particular, it focuses on the dependency between $P_{\rm e}$ and $Z(W)$.

\begin{rmk}[Valid Choice for $\mu$ in \eqref{eq:jointPeN2}]
By constructing a function $h(x)$ as in the proof of Theorem \ref{th:validchoice} contained in Section \ref{subsec:designh}, we immediately have that valid choices of $\mu$ in \eqref{eq:jointPeN2} are $\mu=4.714$ for any $\rm BMSC$ and $\mu=3.637$ for the special case of the $\rm BEC$. 
\end{rmk}

\begin{rmk}[Error Exponent Regime and Theorem \ref{th:unifexp}]
By choosing $\gamma$ close to 1, we recover the result \cite{ArT09} concerning the error exponent regime: if we allow the gap to capacity to be arbitrary small but independent of $N$, then $P_{\rm e}$ is $O(2^{-N^\beta})$ for any $\beta\in (0, 1/2)$.\footnote{Theorem \ref{th:unifexp} also contains as a particular case the stronger result in \cite{XG13}, where the authors prove that the block length scales polynomially fast with the inverse of the gap to capacity, while the error probability is upper bounded by $2^{-N^{0.49}}$.} On the contrary, note that it is not possible to recover from Theorem \ref{th:unifexp} the result of Theorem \ref{th:scalingexp} concerning the scaling exponent regime. Indeed, choose $\gamma$ close to $1/(1+\mu)$. Then, the exponent $\gamma\cdot h_2^{(-1)}\left((\gamma(\mu+1)-1)/(\gamma\mu)\right)$ tends to $0$. This means that we approach a regime in which the error probability is independent of $N$, but $N$ is $O\left(1/(I(W)-R)^{\mu+1}\right)$, instead of $O\left(1/(I(W)-R)^{\mu}\right)$, as in \eqref{eq:scalingexp}. We believe that this is only an artifact of the proof technique used to show Theorem \ref{th:unifexp} and that it might be possible to find a joint scaling that contains as special cases the error exponent and the scaling exponent regimes. 
\end{rmk}

\begin{rmk}[Dependency between $P_{\rm e}$ and $Z(W)$]\label{rmk:Battadep}
Consider the transmission over a $\rm BMSC$ $W$ with Bhattacharyya parameter $Z(W)$. Then, under the hypotheses of Theorem \ref{th:unifexp}, it is possible to prove that
\begin{equation}\label{eq:jointPeNZ}
\begin{split}
P_{\rm e} &\le N\cdot Z(W)^{\scriptstyle  \frac{1}{2}\cdot N^{\scriptstyle \gamma\cdot h_2^{(-1)}\left(\frac{\gamma(\mu+1)-1}{\gamma\mu}\right)}},\\
N &\le \frac{\beta_4}{(I(W)-R)^{\mu/(1-\gamma)}},
\end{split}
\end{equation}
where $\beta_4$ is a universal constant that does not depend on $W$ or on $\gamma$. A sketch of the proof of this statement is given in Appendix \ref{app:Battadep}. In short, the error probability scales as $Z(W)$ raised to some power of $N$, where the exponent follows the trade-off of Theorem \ref{th:unifexp}. To see that this is a meaningful bound, consider the case of the transmission over the $\rm BEC$ in the error exponent regime. On the one hand, formula \eqref{eq:jointPeNZ} gives that $P_{\rm e}$ scales roughly as $Z(W)^{\sqrt{N}}$. On the other hand, $P_{\rm e}\ge \max_{i\in \mathcal I} Z_n^{(i)}$, where $\mathcal I$ denotes the set of information positions and $Z_n^{(i)}$ is a polynomial in $Z(W)$ with minimum degree that scales roughly\footnote{To see this, note that the minimum degree of $Z_n^{(i)}$ seen as a polynomial in $Z(W)$ is equal to the minimum distance of the code, which scales roughly as $\sqrt{N}$ according to Lemma 4 of \cite{HKU09a}.} as $\sqrt{N}$. The scaling between the error probability and the Bhattacharyya parameter will be further explored in Section \ref{sec:floors}.  
\end{rmk}

\subsection{Proof of Theorem \ref{th:unifexp}}\label{subsec:proofjoint}

\begin{proof}
Let $Z_n(W)$ be the Bhattacharyya process associated with the channel $W$. Then, by following the same procedure that gives \eqref{eq:finres}, we have that, for any $n_0\in \mathbb N$,
\begin{equation}\label{eq:bdZno}
{\mathbb P}\left(Z_{n_0} \le 2^{-n_0}\right)\ge I(W)- c_5 \hspace{0.1em} 2^{-n_0/\mu},
\end{equation}
where $c_5$ is a constant that does not depend on $n$ and is given by $c_5 = \sqrt{2} + 2/\delta$, with $\delta$ defined as in \eqref{eq:defdelta}.

Let $\{B_n\}_{n \ge 1}$ be a sequence of i.i.d. random variables with distribution Bernoulli$\left(1/2\right)$. Then, by using \eqref{eq:eqBMSC}, it is clear that, for $n\ge 1$,
\begin{equation*}
Z_{n_0+n} \le  \left\{\begin{array}{ll} Z_{n_0+n-1}^2, & \mbox{ if } B_n=1, \\ 2 Z_{n_0+n-1}, & \mbox{ if } B_n=0. \end{array}\right. 
\end{equation*}
Therefore, by applying Lemma 22 of \cite{HAU14}, we obtain that, for $n_1\ge 1$,
\begin{equation}\label{eq:lemma22}
{\mathbb P}\left(Z_{n_0+n_1} \le 2^{\scriptstyle-2^{\scriptstyle\sum_{i=1}^{n_1} B_i}}\mid Z_{n_0} = x\right)\ge 1- c_6 \hspace{0.1em} x(1-\log_2 x),
\end{equation}
with $c_6 = 2/(\sqrt{2}-1)^2$.

Consequently, we have that
\begin{equation}\label{eq:Zn0n11}
\begin{split}
{\mathbb P}\left(Z_{n_0+n_1} \le 2^{\scriptstyle-2^{\scriptstyle\sum_{i=1}^{n_1} B_i}}\right) &={\mathbb P}\left(Z_{n_0} \le 2^{-n_0}\right) \cdot {\mathbb P}\left(Z_{n_0+n_1} \le 2^{\scriptstyle-2^{\scriptstyle\sum_{i=1}^{n_1} B_i}}\mid Z_{n_0} \le 2^{-n_0}\right)  \\
& \stackrel{\mathclap{\mbox{\footnotesize(a)}}}{\ge} {\mathbb P}\left(Z_{n_0} \le 2^{-n_0}\right) \cdot \left( 1- c_6 \hspace{0.1em} 2^{-n_0}(1+n_0) \right)\\
& \stackrel{\mathclap{\mbox{\footnotesize(b)}}}{\ge} \left( I(W)- c_5 \hspace{0.1em} 2^{-n_0/\mu} \right) \cdot \left( 1- c_6\hspace{0.1em} \frac{\sqrt{2}}{\ln 2} \hspace{0.1em} 2^{-n_0/2} \right)\\
& \stackrel{\mathclap{\mbox{\footnotesize(c)}}}{\ge} I(W)- \left(c_5+c_6\hspace{0.1em} \frac{\sqrt{2}}{\ln 2}\right) \hspace{0.1em} 2^{-n_0/\mu}, 
\end{split}
\end{equation}
where the inequality (a) uses \eqref{eq:lemma22} and the fact that $1- c_6 \hspace{0.1em} x(1-\log_2 x)$ is decreasing in $x$ for any $x\le 2^{-n_0}\le 1/2$; the inequality (b) uses \eqref{eq:bdZno} and that $1- c_6 \hspace{0.1em} 2^{-n_0}(1+n_0)\ge 1- c_6 \hspace{0.1em} \sqrt{2}\cdot 2^{-n_0/2} / \ln 2 $ for any $n_0\in \mathbb N$; and the inequality (c) uses that $\mu >2$. 

Let $h_2(x)=-x\log_2 x-(1-x)\log_2 (1-x)$ denote the binary entropy function. Then, for any $\epsilon \in (0, 1/2)$,
\begin{equation}\label{eq:Zn0n12}
\begin{split}
{\mathbb P}\left(2^{\scriptstyle-2^{\scriptstyle\sum_{i=1}^{n_1} B_i}} > 2^{\scriptstyle-2^{\scriptstyle n_1 \epsilon}}\right) &= {\mathbb P}\left(\sum_{i=1}^{n_1} B_i < n_1 \epsilon\right) \\
&\le {\mathbb P}\left(\sum_{i=1}^{n_1} B_i \le \lfloor n_1 \epsilon \rfloor \right) \\
&= \sum_{k=0}^{\lfloor n_1 \epsilon \rfloor} \binom{n_1}{k} \left(\frac{1}{2}\right)^{n_1}\\
&\stackrel{\mathclap{\mbox{\footnotesize(a)}}}{\le} \left(\frac{1}{2}\right)^{n_1} 2^{n_1 h_2(\lfloor n_1 \epsilon \rfloor/n_1)} \\
&\stackrel{\mathclap{\mbox{\footnotesize(b)}}}{\le} 2^{-n_1(1-h_2(\epsilon))},
\end{split}
\end{equation}
where the inequality (a) uses formula (1.59) of \cite{RiU08}; and the inequality (b) we uses that $h_2(x)$ is increasing for any $x\le 1/2$. 

Note that, for any two events $A$ and $B$, ${\mathbb P}(A\cap B) \ge {\mathbb P}(A)+{\mathbb P}(B)-1$. Hence, by combining \eqref{eq:Zn0n11} and \eqref{eq:Zn0n12}, we obtain that
\begin{equation}\label{eq:Zn0n13}
{\mathbb P}\left(Z_{n_0+n_1} \le 2^{\scriptstyle-2^{\scriptstyle n_1\epsilon} }\right) 
\ge I(W)- \left(c_5+c_6\hspace{0.1em} \frac{\sqrt{2}}{\ln 2}\right) \hspace{0.1em} 2^{-n_0/\mu}-2^{-n_1(1-h_2(\epsilon))}.
\end{equation}
Let $n\ge 1$. Set $n_1 = \lceil\gamma n\rceil$, $n_0 = n - \lceil\gamma n\rceil$, and $\epsilon = h_2^{(-1)}\left((\gamma(\mu+1)-1)/(\gamma\mu)\right)$, where $h_2^{(-1)}(\cdot)$ is the inverse of $h_2(x)$ for any $x\in [0, 1/2]$. Note that if $\gamma \in \left(1/(1+\mu), 1\right)$, then $\epsilon\in (0, 1/2)$. Consequently, formula \eqref{eq:Zn0n13} can be rewritten as
\begin{equation}\label{eq:Zn0n1fin}
{\mathbb P}\left(Z_{n_0+n_1} \le 2^{- \scriptstyle 2^{\scriptstyle n \hspace{0.1em}\gamma\hspace{0.1em} h_2^{(-1)}\left(\frac{\gamma(\mu+1)-1}{\gamma\mu}\right)}}\right) 
\ge I(W)- c_7 \hspace{0.1em} 2^{-n\scriptstyle\frac{1-\gamma}{\mu}},
\end{equation}
with $c_7=1+\sqrt{2}\left(c_5+c_6\hspace{0.1em} \sqrt{2}/\ln 2\right)$.

Consider the transmission of a polar code of block length $N=2^n$ and rate $R$ given by the RHS of \eqref{eq:Zn0n1fin}. Then, the result \eqref{eq:jointPeN2} holds with $\beta_3 = c_7^\mu$. 

\end{proof}


\section{Absence of Error Floors}\label{sec:floors}

In the discussion of Remark \ref{rmk:Battadep} in Section \ref{subsec:resjoint}, we study the dependency between the error probability and the Bhattacharyya parameter, and we consider a setting in which, as the channel varies, the polar code used for the transmission changes accordingly. In this section, we consider a different scenario in which the polar code stays fixed as the channel varies, and we prove a result about the speed of decay of the error probability as a function of the Bhattacharyya parameter of the channel. By doing so, we conclude that polar codes are not affected by error floors. More specifically, in Section \ref{subsec:resfloor} we formalize and discuss this result, and in Section \ref{subsec:prooffloor} we present the proof.

\subsection{Main Result: Statement and Discussion}\label{subsec:resfloor}

Let $\mathcal C$ be the polar code with information set $\mathcal{I}$ designed for transmission over the $\rm BMSC$ $W'$ with Bhattacharyya parameter $Z(W')$. Then, the actual channel, over which transmission takes place, is the $\rm BMSC$ $W$ with Bhattacharyya parameter $Z(W)$. In the error floor regime, the code $\mathcal C$ is fixed and $W$ varies. The aim is to study the scaling between the error probability under SC decoding and the Bhattacharyya parameter $Z(W)$.

Denote by $Z_{n}^{(i)}(W)$ the Bhattacharyya parameter of the synthetic channel of index $i$ obtained from $W$ after $n$ steps of polarization. The main result is presented in Theorem \ref{th:floor} and it relates $Z_{n}^{(i)}(W)$ obtained from the channel $W$ to $Z_{n}^{(i)}(W')$ obtained from the channel $W'$. From this, in Corollary \ref{cor:floor}, we relate the sum of the Bhattacharyya parameters at the information positions obtained from $W$, i.e., $\tilde{P}_{\rm e}(W)\triangleq \sum_{i\in \mathcal I} Z_n^{(i)}(W)$, to the sum of Bhattacharyya parameters obtained from $W'$, i.e., $\tilde{P}_{\rm e}(W')\triangleq \sum_{i\in \mathcal I} Z_n^{(i)}(W')$. Note that the indices of the information positions are the same in both sums, since the information set $\mathcal{I}$ is fixed. The proof of Theorem \ref{th:floor} is in Section \ref{subsec:prooffloor}, and the proof of Corollary \ref{cor:floor} naturally follows. 

\begin{theorem}[Scaling of $Z_n^{(i)}(W)$]\label{th:floor}
Consider two $\rm BMSC$s $W$ and $W'$ with Bhattacharyya parameter $Z(W)$ and $Z(W')$, respectively. For $n\in \mathbb N$ and $i\in \{1, \cdots, 2^n\}$, let $Z_n^{(i)}(W)$ be the Bhattacharyya parameter of the channel $W_n^{(i)}$ obtained from $W$ via channel polarization and let $Z_n^{(i)}(W')$ be similarly obtained from $W'$. If $Z(W)\le Z(W')^2$, then
\begin{equation}\label{eq:Battarel}
Z_{n}^{(i)}(W) \le Z_{n}^{(i)}(W')^{\textstyle\frac{\log_2 Z(W)}{\log_2 Z(W')}}.
\end{equation}
If $W$ and $W'$ are $\rm BEC$s, then \eqref{eq:Battarel} holds if $Z(W)\le Z(W')$.
\end{theorem}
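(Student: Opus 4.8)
The plan is to prove the inequality by induction on $n$, exploiting the recursive structure of channel polarization. For $n=0$ the claim reads $Z(W)\le Z(W')^{\log_2 Z(W)/\log_2 Z(W')}$, which holds with equality since $a = b^{\log_2 a / \log_2 b}$ for any $a, b\in(0,1)$; and for the degenerate cases $Z(W')\in\{0,1\}$ the hypothesis $Z(W)\le Z(W')^2$ forces $Z(W)$ into the same degenerate value and the statement is trivial. So I will assume $Z(W), Z(W')\in(0,1)$ throughout and write $\theta = \log_2 Z(W)/\log_2 Z(W')\ge 2$ (the exponent is $\ge 2$ precisely because $Z(W)\le Z(W')^2$); note $\theta$ does not change as we polarize, since it is attached to the \emph{root} channels.

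For the inductive step, fix the first polarization bit $b_1^{(i)}\in\{0,1\}$ and let $\tilde W = W^{b_1^{(i)}}$, $\tilde W' = (W')^{b_1^{(i)}}$ be the once-polarized channels; the remaining $n-1$ polarization steps applied to $\tilde W$ and $\tilde W'$ produce $Z_n^{(i)}(W)$ and $Z_n^{(i)}(W')$. To close the induction I need the one-step statement: if $Z(W)\le Z(W')^2$ then $Z(\tilde W)\le Z(\tilde W')^\theta$ with the \emph{same} exponent $\theta$, i.e. $\log_2 Z(\tilde W)/\log_2 Z(\tilde W')\ge \theta$, and moreover $Z(\tilde W)\le Z(\tilde W')^2$ so that the inductive hypothesis applies to $\tilde W, \tilde W'$ over $n-1$ steps. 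For the ``$-$'' (worse) child I use \eqref{eq:minusB}: $Z(\tilde W)\le 2Z(W)-Z(W)^2 \le 2Z(W)$ and $Z(\tilde W')\ge Z(W')\sqrt{2-Z(W')^2}\ge Z(W')$; combined with $Z(W)\le Z(W')^\theta$ from the induction hypothesis (applied with $n=1$, or rather tracking it through), I must verify $2Z(W')^\theta \le \bigl(Z(W')\sqrt{2-Z(W')^2}\bigr)^\theta$ — and here is the subtlety, so I will instead feed the bound $(2-x)(1-x)^{1/3}$-type estimate, exactly the kind of inequality that appears in \eqref{eps-solution}: one shows $2x - x^2 \le \bigl(x\sqrt{2-x^2}\bigr)^{2}$ is \emph{false}, so the right move is to bound $Z(\tilde W) \le (2-Z(W))Z(W) \le \bigl((2-Z(W'))\bigr)\cdot Z(W')^\theta$ and $Z(\tilde W')\ge Z(W')\sqrt{2-Z(W')^2}$, then check $(2-x)x^\theta \le \bigl(x\sqrt{2-x^2}\bigr)^\theta$ for $x\in(0,1)$ and $\theta\ge 2$, i.e. $2-x \le (2-x^2)^{\theta/2}$; since $\theta\ge 2$ this follows from $2-x\le 2-x^2$. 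For the ``$+$'' (better) child it is immediate from \eqref{eq:plusB}: $Z(\tilde W) = Z(W)^2 \le Z(W')^{2\theta} = \bigl(Z(\tilde W')\bigr)^\theta$, and $Z(\tilde W) = Z(W)^2 \le (Z(W')^2)^2 = Z(\tilde W')^2$. In both cases the one-step invariant is preserved, and the inductive hypothesis over the remaining $n-1$ steps yields $Z_n^{(i)}(W)\le Z_n^{(i)}(W')^\theta$, which is \eqref{eq:Battarel}.

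For the $\rm BEC$ case the same induction works but starting from the weaker hypothesis $Z(W)\le Z(W')$, hence $\theta = \log_2 Z(W)/\log_2 Z(W')\ge 1$. Now the ``$-$'' child satisfies the \emph{equality} \eqref{eq:minusBEC}, $Z(\tilde W) = 2Z(W) - Z(W)^2 = (2-Z(W))Z(W)$ and likewise $Z(\tilde W') = (2-Z(W'))Z(W')$; so I must check $(2-x)x \le \bigl((2-y)y\bigr)^\theta$ whenever $x\le y^\theta$ with $x,y\in(0,1)$, $\theta\ge 1$. Writing $g(t) = (2-t)t$ on $(0,1)$ — increasing there — it suffices to show $g(y^\theta)\le g(y)^\theta$, i.e. $(2-y^\theta)y^\theta \le \bigl((2-y)y\bigr)^\theta$, i.e. $2-y^\theta \le (2-y)^\theta$, which holds for $\theta\ge 1$ since $t\mapsto (2-y)^t + y^t$ type convexity, or more simply: $(2-y)^\theta \ge 2 - y \ge 2 - y^\theta$ when $\theta\ge 1$ (using $(2-y)\ge 1$ so $(2-y)^\theta\ge (2-y)$, and $y^\theta\le y$). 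The ``$+$'' child is again trivial: $Z(\tilde W)=Z(W)^2\le (Z(W')^\theta)^2 = Z(\tilde W')^\theta$.

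The main obstacle is the monotonicity-and-power bookkeeping for the ``$-$'' child: one must confirm that the exponent $\theta$ is \emph{preserved} (not merely that some inequality survives) under the worst-case upper bound for $Z(W^0)$ versus the best-case lower bound for $Z((W')^0)$, and this is exactly where the asymmetry between the interval $[x\sqrt{2-x^2},\,2x-x^2]$ for a general $\rm BMSC$ and the single point $2x-x^2$ for the $\rm BEC$ forces the different hypotheses ($Z(W)\le Z(W')^2$ versus $Z(W)\le Z(W')$). The required elementary inequalities ($2-x\le(2-x^2)^{\theta/2}$ for $\theta\ge2$, and $2-y^\theta\le(2-y)^\theta$ for $\theta\ge1$) are routine once isolated; the care is entirely in matching the right bound to the right child and verifying the invariant $Z(\tilde W)\le Z(\tilde W')^2$ (BMSC) respectively $Z(\tilde W)\le Z(\tilde W')$ (BEC) is self-propagating so the induction actually closes.
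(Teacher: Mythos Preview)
Your overall plan---induction on $n$, splitting on the polarization bit, and playing the upper bound $Z(W^0)\le 2Z-Z^2$ against the lower bound $Z((W')^0)\ge Z'\sqrt{2-(Z')^2}$---is exactly the paper's approach. But your handling of the ``$-$'' child contains two genuine algebra slips.

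\emph{General $\rm BMSC$.} You write $Z(\tilde W)\le (2-Z(W))Z(W)\le(2-Z(W'))\,Z(W')^\theta$. The second inequality is false: since $Z(W)=Z(W')^\theta < Z(W')$, we have $2-Z(W)>2-Z(W')$, so the step goes the wrong way. Substituting $Z(W)=Z(W')^\theta$ directly, what you actually need is $Z(W')^\theta\bigl(2-Z(W')^\theta\bigr)\le\bigl(Z(W')\sqrt{2-Z(W')^2}\bigr)^\theta$, i.e.\ with $x=Z(W')$,
\[
2-x^\theta\;\le\;(2-x^2)^{\theta/2}\qquad(\theta\ge 2),
\]
not $2-x\le(2-x^2)^{\theta/2}$. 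This is precisely the inequality the paper isolates (step (c) of \eqref{eq:ineqminus}); it follows by setting $u=x^2$, $s=\theta/2\ge1$ and checking that $g(u)=(2-u)^s+u^s-2$ satisfies $g(1)=0$ and $g'(u)=s\bigl(u^{s-1}-(2-u)^{s-1}\bigr)<0$ on $(0,1)$.

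\emph{$\rm BEC$.} The target $2-y^\theta\le(2-y)^\theta$ is correct, but your chain ``$(2-y)^\theta\ge 2-y\ge 2-y^\theta$'' fails at the second link: for $y\in(0,1)$ and $\theta\ge1$ one has $y^\theta\le y$, hence $2-y^\theta\ge 2-y$, not $\le$. The same monotonicity argument as above (now with $s=\theta$, $u=y$) fixes this.

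One structural remark: you peel off the \emph{first} bit and then invoke the theorem at level $n-1$ on $(\tilde W,\tilde W')$, which forces you also to propagate the hypothesis $Z(\tilde W)\le Z(\tilde W')^2$; you mention this but do not verify it for the ``$-$'' child (it does hold, since $t\mapsto t(2-t)$ is increasing on $(0,1)$ and $Z(W')^\theta\le Z(W')^2$). The paper instead peels off the \emph{last} bit, carrying the single fixed exponent $\eta=\log_2 Z(W)/\log_2 Z(W')$ through all levels; this avoids recomputing $\tilde\theta$ and re-checking $\tilde\theta\ge2$ at every step, so the bookkeeping is lighter even though the core inequality is identical.
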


\begin{cor}[Scaling of $\tilde{P}_{\rm e}(W)$]\label{cor:floor}
Let $W'$ be a $\rm BMSC$ with Bhattacharyya parameter $Z(W')$ and let $\mathcal C$ be the polar code of block length $N=2^n$ and rate $R$ for transmission over $W'$. Denote by $\tilde{P}_{\rm e}(W')$ the sum of the Bhattacharyya parameters at the information positions obtained from $W'$, i.e., $\tilde{P}_{\rm e}(W')\triangleq \sum_{i\in \mathcal I} Z_n^{(i)}(W')$, where $\mathcal{I}$ is the information set of the polar code $\mathcal C$. Now, consider the transmission over the $\rm BMSC$ $W$ with Bhattacharyya parameter $Z(W)$ by using the polar code $\mathcal C$ and let $\tilde{P}_{\rm e}(W)$ be the sum of the Bhattacharyya parameters at the information positions obtained from $W$, i.e., $\tilde{P}_{\rm e}(W)\triangleq \sum_{i\in \mathcal I} Z_n^{(i)}(W)$. If $Z(W)\le Z(W')^2$, then
\begin{equation}\label{eq:Perel}
\tilde{P}_{\rm e}(W)\le \tilde{P}_{\rm e}(W')^{\textstyle\frac{\log_2 Z(W)}{\log_2 Z(W')}}.
\end{equation}
If $W$ and $W'$ are $\rm BEC$s, then \eqref{eq:Perel} holds if $Z(W)\le Z(W')$.
\end{cor}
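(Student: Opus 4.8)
The plan is to derive Corollary \ref{cor:floor} from Theorem \ref{th:floor} by a direct summation argument. Set $\kappa \triangleq \frac{\log_2 Z(W)}{\log_2 Z(W')}$, and note first that the hypothesis $Z(W) \le Z(W')^2$ (or $Z(W) \le Z(W')$ in the $\rm BEC$ case) guarantees $\kappa \ge 2$ (respectively $\kappa \ge 1$), since $\log_2$ is negative on $(0,1)$ and the inequality flips. In particular $\kappa \ge 1$ in all cases. Theorem \ref{th:floor} applies term by term to each information index $i \in \mathcal I$, giving $Z_n^{(i)}(W) \le Z_n^{(i)}(W')^{\kappa}$. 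Summing over $i \in \mathcal I$ yields
\begin{equation*}
\tilde{P}_{\rm e}(W) = \sum_{i\in \mathcal I} Z_n^{(i)}(W) \le \sum_{i\in \mathcal I} Z_n^{(i)}(W')^{\kappa}.
\end{equation*}

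The remaining step is to show $\sum_{i\in \mathcal I} a_i^{\kappa} \le \bigl(\sum_{i\in \mathcal I} a_i\bigr)^{\kappa}$ for nonnegative reals $a_i = Z_n^{(i)}(W') \in [0,1]$ and exponent $\kappa \ge 1$. This is the standard superadditivity (``$\ell^p$ norm monotonicity'') fact: for $\kappa \ge 1$ and nonnegative $a_i$, $\|a\|_{\kappa} \le \|a\|_1$, i.e. $\sum a_i^{\kappa} \le \bigl(\sum a_i\bigr)^{\kappa}$. One clean way to see it: let $S = \sum_i a_i$; if $S = 0$ the claim is trivial, otherwise $a_i/S \in [0,1]$ so $(a_i/S)^{\kappa} \le a_i/S$ (as $\kappa \ge 1$ and the base is in $[0,1]$), and summing over $i$ gives $\sum_i (a_i/S)^{\kappa} \le 1$, i.e. $\sum_i a_i^{\kappa} \le S^{\kappa}$. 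Combining with the displayed inequality gives $\tilde{P}_{\rm e}(W) \le \tilde{P}_{\rm e}(W')^{\kappa}$, which is exactly \eqref{eq:Perel}.

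There is essentially no serious obstacle here: the corollary is a routine consequence of Theorem \ref{th:floor}, with the only mild point being to verify that the exponent $\kappa$ is at least $1$ so that the superadditivity inequality is available; this is immediate from the sign of the logarithms and the hypothesis relating $Z(W)$ and $Z(W')$. The $\rm BEC$ case is handled identically, using the $\rm BEC$ version of Theorem \ref{th:floor} and the (weaker) hypothesis $Z(W) \le Z(W')$, which still forces $\kappa \ge 1$.
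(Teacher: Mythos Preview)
Your argument is correct and is precisely the ``natural'' derivation the paper alludes to (the paper does not spell out a proof, merely stating that the corollary follows from Theorem~\ref{th:floor}). The two ingredients you use---the termwise bound $Z_n^{(i)}(W)\le Z_n^{(i)}(W')^{\kappa}$ from Theorem~\ref{th:floor} and the elementary superadditivity $\sum a_i^{\kappa}\le(\sum a_i)^{\kappa}$ for $\kappa\ge 1$---are exactly what is needed, and your verification that $\kappa\ge 1$ under either hypothesis is right.
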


Now, let us discuss how the results above imply that polar codes are not affected by error floors. Denote by $P_{\rm e}(W)$ the error probability under SC decoding for transmission of $\mathcal{C}$ over $W$ and recall from \eqref{eq:ubPe} that $P_{\rm e}(W)\le\tilde{P}_{\rm e}(W)$. Hence, formula \eqref{eq:Perel} implies that
\begin{equation}\label{eq:ubpez1}
P_{\rm e}(W)\le Z(W)^{\textstyle\frac{\log_2\tilde{P}_{\rm e}(W')}{\log_2 Z(W')}}.
\end{equation}
Note that the upper bound \eqref{eq:jointPeNZ} on $P_{\rm e}$ comes from an identical upper bound on the sum of the Bhattacharyya parameters $\tilde{P}_{\rm e}$. Thus, by choosing $\gamma \approx 1$ in \eqref{eq:jointPeNZ}, we have that $\tilde{P}_{\rm e}(W')$ scales roughly as $Z(W')^{\sqrt{N}}$. Therefore, from \eqref{eq:ubpez1} we conclude that $P_{\rm e}(W)$ scales roughly as $Z(W)^{\sqrt{N}}$. This fact excludes the existence of an error floor region. 

Furthermore, in the discussion of Remark \ref{rmk:Battadep}, we pointed out that $P_{\rm e}(W)$ scales as $Z(W)^{\sqrt{N}}$ when $W$ is fixed and, consequently, the polar code can be constructed according to the actual transmission channel. Whereas, in the error floor regime, we fix a polar code and let the transmission channel vary, which means that the code cannot depend on the transmission channel. Hence, from the discussion above, it follows that the dependency between the error probability and the Bhattacharyya parameter of the channel is essentially the same as in the case in which we design the polar code for the actual transmission channel. As a result, in terms of this particular scaling, nothing is lost by considering a ``mismatched'' code. However, considering a  ``mismatched'' code yields a loss in rate. Indeed, if $W$ and $W'$ are $\rm BEC$s, then \eqref{eq:capBatta0} holds with equality, and $Z(W)\le Z(W')$ implies that $I(W)\ge I(W')$. If $W$ and $W'$ can be any $\rm BMSC$, by using \eqref{eq:capBatta0} and \eqref{eq:capBatta} we easily deduce that $Z(W)\le Z(W')^2$ implies $I(W)\ge I(W')$. Recall that the rate of a polar code for $W'$ is such that $R<I(W')$, and the rate of a polar code for $W$ is such that $R<I(W)$. As $I(W)\ge I(W')$, by constructing a polar code for $W$, we can transmit reliably at larger rates.

Before proceeding with the proof of Theorem \ref{th:floor}, let us make a brief remark concerning the case $Z(W)\in \left(Z(W')^2, Z(W')\right]$.

\begin{rmk}[The case $Z(W)\in(Z(W')^2, Z(W'){]}$]\label{rmk:badcase}
If $W$ and $W'$ are $\rm BEC$s, then \eqref{eq:Battarel} and \eqref{eq:Perel} hold for any $Z(W)\le Z(W')$, i.e., for the whole range of parameters of interest, as we think of $W$ as a ``better'' channel than $W'$. On the contrary, if $W$ and $W'$ can be any $\rm BMSC$, we require that $Z(W)\le Z(W')^2$. If there is no additional hypothesis on $W$ and $W'$, the main result \eqref{eq:Battarel} cannot hold in the case $Z(W)\in(Z(W')^2, Z(W'){]}$. Indeed, if $Z(W)=Z(W')$, we can choose $W$ and $W'$ such that $I(W)<I(W')$. If $I(W)<I(W')$, then the number of indices $i_1$ such that $\lim_{n\to \infty}Z_{n}^{(i_1)}(W)=0$ is smaller than the number of indices $i_2$ such that $\lim_{n\to \infty}Z_{n}^{(i_2)}(W')=0$. Hence, \eqref{eq:Battarel} cannot hold for any $i\in \{1, \cdots, 2^n\}$. A natural additional hypothesis consists in assuming that $W'$ is degraded with respect to $W$, i.e., $W \succ W'$. In this case, we can at least ensure that $Z_{n}^{(i)}(W) \le Z_{n}^{(i)}(W')$. However, it is possible to find $W$ and $W'$ such that \eqref{eq:Battarel} is violated for $n=1$ when $Z(W)\in(Z(W')^2, Z(W'){]}$. We leave as open questions whether the bound \eqref{eq:Perel} is still valid and what kind of looser bound holds, when $W \succ W'$ and $Z(W)\in(Z(W')^2, Z(W'){]}$.
\end{rmk}

\subsection{Proof of Theorem \ref{th:floor}}\label{subsec:prooffloor}

\begin{proof}
Assume that, for any $j\in \{1, \cdots, 2^{n-1}\}$ and for some $\eta\in \mathbb R^+$,
\begin{equation}\label{eq:assumpt}
Z_{n-1}^{(j)}(W) \le Z_{n-1}^{(j)}(W')^{\eta}.
\end{equation}
Then, let us study for what values of $\eta$ we have that \eqref{eq:assumpt} implies that, for any $i\in \{1, \cdots, 2^{n}\}$,
\begin{equation}\label{eq:newres}
Z_{n}^{(i)}(W) \le Z_{n}^{(i)}(W')^{\eta}.
\end{equation}

Recall, from Section \ref{sec:prel}, that $(b_1^{(i)}, \cdots, b_n^{(i)})$ denotes the binary representation of the integer $i-1$ over $n$ bits. Let $i$ be an even integer and set $i^+=\displaystyle\frac{i}{2}$. Then, $b_n^{(i)}=1$ and the binary representation of $i^+-1$ over $n-1$ bits is $(b_1^{(i)}, \cdots, b_{n-1}^{(i)})$. Hence, the following chain of inequalities holds for any $\rm BMSC$ $W$:
\begin{equation}\label{eq:ineqplus}
\begin{split}
Z_{n}^{(i)}(W) &\stackrel{\mathclap{\mbox{\footnotesize(a)}}}{=} \left(Z_{n-1}^{(i^+)}(W)\right)^2 \\
&\stackrel{\mathclap{\mbox{\footnotesize(b)}}}{\le} \left(Z_{n-1}^{(i^+)}(W')\right)^{2\eta} \\
&\stackrel{\mathclap{\mbox{\footnotesize(c)}}}{=} \left(Z_{n}^{(i)}(W')\right)^{\eta},
\end{split}
\end{equation}
where the equality (a) uses \eqref{eq:defWni} and \eqref{eq:plusB}; the inequality (b) uses the assumption \eqref{eq:assumpt} with $j=i^+$; and the equality (c) uses again \eqref{eq:defWni} and \eqref{eq:plusB}. Consequently, if $i$ is even, then \eqref{eq:newres} holds for any $\rm BMSC$ $W$ without any restriction on $\eta$. 

Let $i$ be an odd integer and set $i^-=\displaystyle\frac{i-1}{2}$. Then, $b_n^{(i)}=0$ and the binary representation of $i^--1$ over $n-1$ bits is $(b_1^{(i)}, \cdots, b_{n-1}^{(i)})$. Hence, the following chain of inequalities holds for any $\rm BMSC$ $W$:
\begin{equation}\label{eq:ineqminus}
\begin{split}
Z_{n}^{(i)}(W)&\stackrel{\mathclap{\mbox{\footnotesize(a)}}}{\le} Z_{n-1}^{(i^-)}(W)\left(2-Z_{n-1}^{(i^-)}(W)\right) \\
&\stackrel{\mathclap{\mbox{\footnotesize(b)}}}{\le}\left(Z_{n-1}^{(i^-)}(W')\right)^{\eta}\left(2-\left(Z_{n-1}^{(i^-)}(W') \right)^{\eta}\right)\\
& \stackrel{\mathclap{\mbox{\footnotesize(c)}}}{\le} \left(Z_{n-1}^{(i^-)}(W')\right)^{\eta} \left( 2-\left(Z_{n-1}^{(i^-)}(W')\right)^2\right)^{\eta/2} \\
& \stackrel{\mathclap{\mbox{\footnotesize(d)}}}{\le} \left(Z_{n}^{(i)}(W')\right)^{\eta},
\end{split}
\end{equation}
where the inequality (a) uses \eqref{eq:defWni} and \eqref{eq:minusB}; the inequality (b) uses the assumption \eqref{eq:assumpt} with $j=i^-$; the inequality (c) uses that $2-x^\eta \le (2-x^2 )^{\eta/2}$ for any $x\in [0, 1]$ if and only if $\eta\ge 2$; and the inequality (d) uses again \eqref{eq:defWni} and \eqref{eq:minusB}. Consequently, if $i$ is odd, then \eqref{eq:newres} holds for any $\rm BMSC$ $W$, provided that $\eta\ge 2$. If $W$ is a $\rm BEC$, a less restrictive condition on $\eta$ is necessary. Indeed, the following chain of inequalities holds when $W$ is a $\rm BEC$:
\begin{equation}\label{eq:ineqminusBEC}
\begin{split}
Z_{n}^{(i)}(W) &\stackrel{\mathclap{\mbox{\footnotesize(a)}}}{=} Z_{n-1}^{(i^-)}(W)\left(2-Z_{n-1}^{(i^-)}(W)\right) \\
&\stackrel{\mathclap{\mbox{\footnotesize(b)}}}{\le} \left(Z_{n-1}^{(i^-)}(W')\right)^{\eta}\left(2-\left(Z_{n-1}^{(i^-)}(W') \right)^{\eta}\right)\\
& \stackrel{\mathclap{\mbox{\footnotesize(c)}}}{\le} \left(Z_{n-1}^{(i^-)}(W')\right)^{\eta}\left(2-Z_{n-1}^{(i^-)}(W')\right)^{\eta} \\
& \stackrel{\mathclap{\mbox{\footnotesize(d)}}}{=} \left(Z_{n}^{(i)}(W')\right)^{\eta},
\end{split}
\end{equation}
where the equality (a) uses \eqref{eq:defWni} and \eqref{eq:minusBEC}; the inequality (b) uses the assumption \eqref{eq:assumpt} with $j=i^-$; the inequality (c) uses that $2-x^\eta \le (2-x)^{\eta}$ for any $x\in [0, 1]$ if and only if $\eta\ge 1$; and the equality (d) uses again \eqref{eq:defWni} and \eqref{eq:minusBEC}. Consequently, if $i$ is odd and $W$ is a $\rm BEC$, then \eqref{eq:newres} holds provided that $\eta\ge 1$.

By combining \eqref{eq:ineqplus} and \eqref{eq:ineqminus}, we have that if \eqref{eq:assumpt} holds for $\eta \ge 2$ after $n-1$ steps of polarization, then the same relation holds for $\eta \ge 2$ after $n$ steps of polarization. This means that the inequality stays preserved after one more step of polarization. Clearly, as the Bhattacharyya parameter is between $0$ and $1$, a smaller value of $\eta$ gives a tighter bound. Since $Z_0^{(1)}(W) = Z(W)$ and $Z_0^{(1)}(W') = Z(W')$, the smallest choice for $\eta$ is $\log_2 Z(W)/\log_2 Z(W')$. The condition $\eta\ge 2$ is equivalent to $Z(W)\le Z(W')^2$ and, for the case of the $\rm BEC$, the condition $\eta\ge 1$ is equivalent to $Z(W)\le Z(W')$. Eventually, the result \eqref{eq:Battarel} follows easily by induction. 

\end{proof}


\section{Concluding Remarks}\label{sec:concl}

In this paper, we have presented a unified view on the scaling of polar codes, by studying the relation among the fundamental parameters at play, i.e., the block length $N$, the rate $R$, the error probability under successive cancellation (SC) decoding $P_{\rm e}$, the capacity of the transmission channel $I(W)$ and its Bhattacharyya parameter $Z(W)$. Here, we summarize the main results contained in this work, along with open questions and directions for future research.

First of all, we have proved a new upper bound on the scaling exponent for any $\rm BMSC$ $W$. The setting is the following: we fix the error probability $P_{\rm e}$ and we study how the gap to capacity $I(W)-R$ scales with the block length $N$. In particular, $N$ is $O\left(1/(I(W)-R)^\mu \right)$, where $\mu$ is the so-called scaling exponent whose value depends on $W$, and we show a better upper bound on $\mu$ valid for any $\rm BMSC$ $W$. The proof technique consists in relating the value of $\mu$ to the supremum of a function that fulfills certain constraints. Then, we upper bound the supremum by constructing and analyzing a suitable candidate function. We underline that the proposed bound is \emph{provable} and that the analysis of the algorithm is not affected by numerical errors, as all the computations can be reduced to computations over integers, thus they can be performed exactly. The proposed proof technique yields $\mu\le 4.714$ for any $\rm BMSC$, which essentially improves by $1$ the existing upper bound. If $W$ is a $\rm BEC$, we obtain $\mu\le 3.639$, which approaches the value previously computed with heuristic methods. These bounds can be slightly tightened simply by increasing the number of samples used by the algorithm. 
Possibly the most interesting challenge concerning the performance of polar codes consists in improving the scaling exponent, i.e., the speed of decay of the gap to capacity, by changing the construction of the code and by devising better decoding algorithms. One promising method consists in constructing a code that interpolates between a polar and a Reed-Muller code and in using the MAP decoder, or even the low-complexity SCL decoder \cite{MoHaUr14}. Another possibility is to consider the polarization of general $q\times q$ kernels, as briefly discussed at the end of this section. 

Second, we have considered a moderate deviations regime and proved a trade-off between the speed of decay of the error probability and that of the gap to capacity. The setting is the following: we do not fix either the error probability $P_{\rm e}$ or the gap to capacity $I(W)-R$, but we study how fast both $P_{\rm e}$ and $I(W)-R$, as functions of the block length $N$, go to $0$ at the same time. In particular, we show that, if the gap to capacity is such that $$N=O\left(\frac{1}{(I(W)-R)^{\mu/(1-\gamma)}} \right),\qquad \mbox{ for } \gamma \in \left(\frac{1}{(1+\mu)}, 1\right),$$ then the error probability is given by $$P_{\rm e}=O\left(N\cdot 2^{-\scriptstyle N^{\scriptstyle \gamma\cdot h_2^{(-1)}\left(\frac{\gamma(\mu+1)-1}{\gamma\mu}\right)}}\right).$$ Note that, as the exponents $\mu/(1-\gamma)$ and $\gamma\cdot h_2^{(-1)}\left((\gamma(\mu+1)-1)/(\gamma\mu)\right)$ are both increasing in $\gamma$, if the error probability decays faster, then the gap to capacity decays slower. This trade-off recovers the existing result for the error exponent regime, but it does not match the new bound on the scaling exponent. An interesting open question consists in finding the optimal trade-off that provides the fastest possible decay of the error probability, given a certain speed of decay of the gap to capacity. Note that this optimal trade-off would match the existing results for both the error exponent and the scaling exponent regimes.

Third, we have proved that polar codes are not affected by error floors. The setting is the following: we fix a polar code of block length $N$ and rate $R$ designed for a channel $W'$, we let the transmission channel $W$ vary, and we study how the error probability $P_{\rm e}(W)$ scales with the Bhattacharyya parameter $Z(W)$ of the channel $W$. In particular, we show that $$P_{\rm e}(W)\le Z(W)^{\textstyle\frac{\log_2 \tilde{P}_{\rm e}(W')}{\log_2 Z(W')}},$$ where $\tilde{P}_{\rm e}(W')$ denotes the sum of the Bhattacharyya parameters at the information positions obtained by polarizing $W'$. In addition, $\log_2 \tilde{P}_{\rm e}(W')/\log_2 Z(W')$ scales roughly as $\sqrt{N}$, which is the best possible scaling according to the error exponent regime. Hence, the scaling between $P_{\rm e}$ and $Z(W)$ would have been the same, even if we ``matched'' the code to the channel. However, when $W$ and $W'$ can be any $\rm BMSC$, the result holds only if $Z(W)\le Z(W')^2$. An interesting open question is to explore further the case $Z(W)\in (Z(W')^2, Z(W')]$, in order to see whether a similar but perhaps less tight bound still holds. 

Finally, let us highlight that the technical tools developed in this paper have proven useful also in different scenarios. Indeed, the analysis of Section \ref{sec:scalexp} is the starting point for the characterization of the scaling exponent of binary-input energy-harvesting channels \cite{FT16} and of $q$-ary polar codes based on $q\times q$ Reed-Solomon polarization kernels \cite{PU16}.

Why are we interested in $q\times q$ kernels? Such kernels have the potential to improve the scaling behavior of polar codes. As for the error exponent, in \cite{KoSaUrb10} it is proved that, as $q$ goes large, the error probability scales roughly as $2^{-N}$. As for the scaling exponent, in \cite{F14} it is observed that $\mu$ can be reduced when $q \ge 8$. In the recent paper \cite{PU16}, it is shown that, for transmission over the erasure channel, the optimal scaling exponent $\mu = 2$ is approached by using a large kernel and, at the same time, a large alphabet. Furthermore, in \cite{HHthesis}, the author gives evidence supporting the conjecture that, in order to obtain $\mu=2$, it suffices to consider a large random kernel over a binary alphabet. Therefore, providing a rigorous proof of such a conjecture is a very interesting open problem.

\appendix

\subsection{Proof of Lemma \ref{lm:exptoBatta}} \label{app:exptoBatta}

\begin{proof}
First of all, we upper bound ${\mathbb P}(Z_n \in \left[p_{\rm e}\hspace{0.1em} 2^{-n}, 1-p_{\rm e}\hspace{0.1em} 2^{-n}\right])$ as follows: 
\begin{equation}\label{eq:B}
\begin{split}
{\mathbb P}\left(Z_n \in \left[p_{\rm e}\hspace{0.1em} 2^{-n}, 1-p_{\rm e}\hspace{0.1em} 2^{-n}\right]\right)
& \stackrel{\mathclap{\mbox{\footnotesize(a)}}}{=} {\mathbb P}\left( (Z_n(1-Z_n))^{\alpha}\ge (p_{\rm e}\hspace{0.1em} 2^{-n}(1-p_{\rm e}\hspace{0.1em} 2^{-n}))^{\alpha}\right)\\
& \stackrel{\mathclap{\mbox{\footnotesize(b)}}}{\le} \frac{{\mathbb E}\left[(Z_n(1-Z_n))^{\alpha}\right]}{(p_{\rm e}\hspace{0.1em} 2^{-n}(1-p_{\rm e}\hspace{0.1em} 2^{-n}))^{\alpha}}\\
&\stackrel{\mathclap{\mbox{\footnotesize(c)}}}{\le} \frac{c_1\hspace{0.1em}2^{-n\rho}}{(p_{\rm e}\hspace{0.1em} 2^{-n}(1-p_{\rm e}\hspace{0.1em} 2^{-n}))^{\alpha}}\\
&\stackrel{\mathclap{\mbox{\footnotesize(d)}}}{\le} 2\hspace{0.1em}c_1\hspace{0.1em} p_{\rm e}^{-\alpha} \hspace{0.1em} 2^{-n(\rho-\alpha)},
\end{split}
\end{equation}
where the equality (a) uses the concavity of the function $f(x)=(x(1-x))^\alpha$; the inequality (b) follows from Markov inequality; the inequality (c) uses the hypothesis ${\mathbb E}[(Z_n(1-Z_n))^{\alpha}]\le c_1\hspace{0.1em}2^{-n\rho}$; and the inequality (d) uses that $1-p_{\rm e}\hspace{0.1em}2^{-n}\ge 1/2$ for any $n\ge 1$. 

Let us define
\begin{equation}\label{eq:defABC}
\begin{split}
A &= {\mathbb P}\left(Z_n \in \left[0, p_{\rm e}\hspace{0.1em} 2^{-n}\right)\right),\\ 
B &= {\mathbb P}\left(Z_n \in \left[p_{\rm e}\hspace{0.1em} 2^{-n}, 1-p_{\rm e}\hspace{0.1em} 2^{-n}\right]\right),\\ 
C &= {\mathbb P}\left(Z_n \in \left(1-p_{\rm e}\hspace{0.1em} 2^{-n}, 1\right]\right),\\ 
\end{split}
\end{equation}
and let $A'$, $B'$, and $C'$ be the fraction of $A$, $B$, and $C$, respectively, that will go to $0$ as $n\to \infty$. More formally, 
\begin{equation}\label{eq:Aprime}
\begin{split}
A' &= \liminf_{m\to \infty} {\mathbb P}\left(Z_n \in \left[0, p_{\rm e}\hspace{0.1em} 2^{-n}\right), Z_{n+m}\le 2^{-m}\right),\\ 
B' &= \liminf_{m\to \infty} {\mathbb P}\left(Z_n \in \left[p_{\rm e}\hspace{0.1em} 2^{-n}, 1-p_{\rm e}\hspace{0.1em} 2^{-n}\right], Z_{n+m}\le 2^{-m}\right),\\ 
C' &= \liminf_{m\to \infty} {\mathbb P}\left(Z_n \in \left(1-p_{\rm e}\hspace{0.1em} 2^{-n}, 1\right], Z_{n+m}\le 2^{-m}\right).\\ 
\end{split}
\end{equation} 
In \eqref{eq:Aprime} we simply require that $Z_{n+m}$ goes to $0$ as $m$ goes large, and we do not have any requirement on the speed at which it does so. Hence, we could substitute $2^{-m}$ with any other function that is $O(2^{-2^{\beta m}})$ for any $\beta\in (0, 1/2)$, see \cite{ArT09}.

It is clear that
\begin{equation}\label{eq:sumABC}
A' + B' +C' = \liminf_{m\to \infty} {\mathbb P}\left(Z_{n+m}\le 2^{-m}\right)=I(W).
\end{equation}
In addition, from \eqref{eq:B}, we have that
\begin{equation}\label{eq:Bprime}
B' \le B \le 2\hspace{0.1em}c_1\hspace{0.1em} p_{\rm e}^{-\alpha} \hspace{0.1em} 2^{-n(\rho-\alpha)}.
\end{equation}
In order to upper bound $C'$, we proceed as follows:
\begin{equation}\label{eq:Cprimecalc}
\begin{split}
C' &= \liminf_{m\to \infty} {\mathbb P}\left(Z_{n+m}\le 2^{-m}\mid Z_n \in \left(1-p_{\rm e}\hspace{0.1em} 2^{-n}, 1\right]\right) \cdot {\mathbb P}\left(Z_n \in \left(1-p_{\rm e}\hspace{0.1em} 2^{-n}, 1\right]\right) \\
&\le \liminf_{m\to \infty} {\mathbb P}\left(Z_{n+m}\le 2^{-m}\mid Z_n \in \left(1-p_{\rm e}\hspace{0.1em} 2^{-n}, 1\right]\right).
\end{split}
\end{equation}
The last term equals the capacity of a channel with Bhattacharyya parameter in the interval $\left(1-p_{\rm e}\hspace{0.1em} 2^{-n}, 1\right]$. Using \eqref{eq:capBatta}, we obtain that 
\begin{equation}\label{eq:Cprime}
C' \le \sqrt{1-(1-p_{\rm e}\hspace{0.1em} 2^{-n})^2}\le \sqrt{2\hspace{0.1em} p_{\rm e} \hspace{0.1em}2^{-n}}. 
\end{equation}  
As a result, we have that
\begin{equation*}
\begin{split}
{\mathbb P}\left(Z_n \in \left[0, p_{\rm e}\hspace{0.1em} 2^{-n}\right)\right) &= A \ge A' \\
&\stackrel{\mathclap{\mbox{\footnotesize(a)}}}{=} I(W)- B' - C' \\
&\stackrel{\mathclap{\mbox{\footnotesize(b)}}}{\ge} I(W) - 2\hspace{0.1em}c_1\hspace{0.1em} p_{\rm e}^{-\alpha} \hspace{0.1em} 2^{-n(\rho-\alpha)} - \sqrt{2\hspace{0.1em} p_{\rm e}\hspace{0.1em}2^{-n}},\\
&\stackrel{\mathclap{\mbox{\footnotesize(c)}}}{\ge} I(W) - \left(\sqrt{2 \hspace{0.1em}p_{\rm e}}+2\hspace{0.1em}c_1\hspace{0.1em}p_{\rm e}^{-\alpha}\right)2^{-n(\rho-\alpha)},
\end{split}
\end{equation*}
where the equality (a) uses \eqref{eq:sumABC}; the inequality (b) uses \eqref{eq:Bprime} and \eqref{eq:Cprime}; and the inequality (c) uses that $\rho\le 1/2$. This chain of inequalities implies the desired result.

\end{proof}

\subsection{Proof of Lemma \ref{lm:htoexp}} \label{app:htoexp}

\begin{proof}
Let $\alpha^* =\min(1/2, \rho_1/\log_2(4/3))$. As ${\mathbb E}\left[\bigl( Z_n(1-Z_n) \bigr)^{\alpha}\right]$ is decreasing in $\alpha$, we can assume that $\alpha < \alpha^*$ without loss of generality. As $h(x)\ge 0$ for any $x \in [0, 1]$ and $Z_n \in [0, 1]$ for any $n \in \mathbb N$, we have that 
\begin{equation}\label{eq:fin1}
{\mathbb E}\left[ \bigl( Z_n(1-Z_n) \bigr)^{\alpha}\right] \le \frac{1}{\delta}{\mathbb E}\left[(1-\delta)h(Z_n)+\delta(Z_n(1-Z_n))^{\alpha}\right]=\frac{1}{\delta}{\mathbb E}\left[g(Z_n)\right],
\end{equation}
with
\begin{equation}\label{eq:defgx}
g(x) = (1-\delta)h(x)+\delta (x(1-x))^{\alpha}. 
\end{equation}

Let
\begin{equation*}
L_g = \sup_{x\in (0, 1), y\in [x\sqrt{2-x^2}, 2x-x^2]} \frac{g(x^2)+g(y)}{2g(x)}.
\end{equation*}
Then, by definition \eqref{eq:eqBMSC} of the Bhattacharyya process $Z_n$, we have that
\begin{equation*}
{\mathbb E}\left[g(Z_n)\mid Z_{n-1}\right] \le g(Z_{n-1}) L_g.
\end{equation*}
Consequently, by induction, one can readily prove that 
\begin{equation}\label{eq:fin3}
{\mathbb E}\left[g(Z_n)\right] \le (L_g)^n \hspace{0.1em} g(Z(W)) \le (L_g)^n, 
\end{equation}
where the last inequality follows from the fact that $g(x) \leq 1$ for $x \in [0,1]$.

Now, by combining \eqref{eq:fin1} with \eqref{eq:fin3}, we obtain that
\begin{equation} \label{Z_n-L_g} 
{\mathbb E}[( Z_n(1-Z_n))^{\alpha}] \le \frac{1}{\delta}(L_g)^n.
\end{equation}
Hence, in order to conclude the proof, it remains to find an upper bound on $L_g$, i.e., to show that $L_g \le 2^{-\rho_1} + 2\sqrt{2} \delta c_3$.
By using \eqref{eq:suphrho}, after some calculations, we have that
\begin{equation}\label{eq:supexpr}
\frac{g(x^2)+g(y)}{2g(x)} \le \frac{(1-\delta)h(x)2^{-\rho_1}+\displaystyle\frac{\delta}{2} \Bigl(\left(x^2(1-x)(1+x)\right)^{\alpha}+(y(1-y))^{\alpha}\Bigr)}{(1-\delta)h(x) + \delta(x(1-x))^{\alpha}}.
\end{equation}
For any $y\in [x\sqrt{2-x^2}, 2x-x^2]$, we obtain
\begin{equation}\label{eq:y1}
y(1-y)\le x(2-x)(1-x\sqrt{2-x^2}).
\end{equation}
In addition, for any $x\in (0, 1)$,
\begin{equation}\label{eq:y2}
1-x\sqrt{2-x^2} \le (1-x)^{4/3}.
\end{equation}
In order to prove \eqref{eq:y2}, one strategy is the following: elevate the LHS and the RHS to the third power; isolate on one side the terms that multiply $\sqrt{2-x^2}$; and square again the LHS and the RHS. In this way, we have that \eqref{eq:y2} is equivalent to 
\begin{equation*}
(1-x)^4 (2+8x+3x^2+4x^3-4x^4-4x^5-x^6)\ge 0,
\end{equation*}
which is satisfied when $x \in (0, 1)$. 

Therefore, by combining \eqref{eq:supexpr}, \eqref{eq:y1}, and \eqref{eq:y2}, we obtain that
\begin{equation}\label{eq:supexpr2}
\frac{g(x^2)+g(y)}{2g(x)} \le \frac{(1-\delta)h(x)2^{-\rho_1}+\delta (x(1-x))^{\alpha}\hspace{0.1em}t(x)}{(1-\delta)h(x) + \delta(x(1-x))^{\alpha}},
\end{equation}
with
\begin{equation}\label{eq:tx}
t(x)=\displaystyle\frac{1}{2}\left( \bigl(x(1+x)\bigr)^\alpha + \bigl((2-x)(1-x)^{1/3}\bigr)^\alpha \right).
\end{equation}

First of all, we upper bound the expression on the RHS of \eqref{eq:supexpr2} when $x$ is small. Clearly, $t(0)< 2^{-\rho_1}$ and $t(1/2) > 2^{-\rho_1}$, as $\rho_1 \le 1/2$ and $\alpha < \alpha^*$. In addition, some passages of calculus show that the second derivative of $t(x)$ is given by 
\begin{equation*}
\frac{\alpha}{2}\frac{(x(1+x))^{\alpha}}{x^2 (1+x)^2}\left(-1-2x-2x^2+\alpha(1+2x)^2\right) + \frac{\alpha}{18}\frac{\bigl((2-x)(1-x)^{1/3}\bigr)^{\alpha}}{(2-3x+x^2)^2}\left(-21+30x-12x^2+\alpha(5-4x)^2\right).
\end{equation*}
As $\alpha < 1/2$, we have that
\begin{equation}
\begin{split}
-1-2x-2x^2+\alpha(1+2x)^2 &\le -1-2x-2x^2+\frac{(1+2x)^2}{2} < 0,\\
-21+30x-12x^2+\alpha(5-4x)^2 &\le -1-2x-2x^2+\frac{(5-4x)^2}{2} < 0.\\
\end{split}
\end{equation}
Hence, $t(x)$ is concave for any $x\in (0, 1)$. This implies that there exist $\epsilon_1(\alpha), \epsilon_2(\alpha) \in (0, 1)$ such that
\begin{equation}\label{eq:t}
t(x) \le 2^{-\rho_1}, \qquad \qquad \forall\, x\in [0, \epsilon_1(\alpha)]\cup [1-\epsilon_2(\alpha), 1]. 
\end{equation}
Indeed, the precise values of $\epsilon_1(\alpha)$ and $\epsilon_2(\alpha)$ can be found from \eqref{eps-solution}. 
By combining \eqref{eq:supexpr2} with \eqref{eq:t}, we have that, for any $x\in [0, \epsilon_1(\alpha)]\cup [1-\epsilon_2(\alpha), 1]$ and for any $y\in [x\sqrt{2-x^2}, 2x-x^2]$,
\begin{equation}\label{eq:bLg1}
\frac{g(x^2)+g(y)}{2g(x)} \le 2^{-\rho_1}.
\end{equation}

Then, we upper bound the expression on the RHS of \eqref{eq:supexpr2} when $x$ is not too small, namely, $x\in (\epsilon_1(\alpha), 1-\epsilon_2(\alpha))$: 
\begin{equation}\label{eq:bLg2}
\begin{split}
\frac{(1-\delta)h(x)2^{-\rho_1}+\delta (x(1-x))^{\alpha}\hspace{0.1em}t(x)}{(1-\delta)h(x) + \delta(x(1-x))^{\alpha}}
& \stackrel{\mathclap{\mbox{\footnotesize(a)}}}{\le} \frac{(1-\delta)h(x)2^{-\rho_1}+\delta (x(1-x))^{\alpha}\hspace{0.1em}2^{\alpha}}{(1-\delta)h(x) + \delta(x(1-x))^{\alpha}}\\
& \stackrel{\mathclap{\mbox{\footnotesize(b)}}}{\le} 2^{-\rho_1}+\delta\frac{\hspace{0.1em}2^{\alpha}}{1-\delta} \frac{(x(1-x))^{\alpha}}{h(x)} \\
& \stackrel{\mathclap{\mbox{\footnotesize(c)}}}{\le} 2^{-\rho_1}
+ \sqrt{2} \frac{\delta}{1-\delta} c_3,
\end{split}
\end{equation}
where the inequality (a) uses that $t(x)\le 2^{\alpha}$ for any $x\in (0, 1)$; the inequality (b) uses that $h(x) \ge 0$ and $(x(1-x))^{\alpha} \ge 0$; and the inequality (c) uses that $\alpha \le 1/2$ and the definition of $c_3$ in \eqref{eq:defa}. By putting \eqref{eq:bLg1} and \eqref{eq:bLg2} together, we have that
\begin{equation}\label{eq:fin4}
L_g \le 2^{-\rho_1} + \sqrt{2} \frac{\delta}{1-\delta} c_3.
\end{equation} 
By combining \eqref{Z_n-L_g} and \eqref{eq:fin4}, the result for a general $\rm BMSC$ follows.  

Finally, consider the special case in which $W$ is a $\rm BEC$. Clearly, \eqref{eq:fin1} still holds, and, by using the definition \eqref{eq:recBEC} of the Bhattacharyya process $Z_n$ for the $\rm BEC$, in analogy to \eqref{eq:fin3}, we obtain that
\begin{equation}
{\mathbb E}[( Z_n(1-Z_n))^{\alpha}] \le \frac{1}{\delta}(L'_g)^n,
\end{equation}
where we define
\begin{equation*}
L'_g = \sup_{x\in (0, 1)} \frac{g(x^2)+g(2x-x^2)}{2g(x)}.
\end{equation*}
By using \eqref{eq:suphrhoBEC}, after some calculations, we have that 
\begin{equation*}
\frac{g_0(x^2)+g_0(2x-x^2)}{2g_0(x)} \le \frac{(1-\delta)h(x)2^{-\rho_1}+\delta (x(1-x))^{\alpha}\hspace{0.1em}t'(x)}{(1-\delta)h(x) + \delta(x(1-x))^{\alpha}},
\end{equation*}
with
\begin{equation*}
t'(x)=\displaystyle\frac{1}{2}\left( \bigl(x(1+x)\bigr)^\alpha + \bigl((2-x)(1-x)\bigr)^\alpha \right).
\end{equation*}
As $(1-x)\le (1-x)^{1/3}$ for any $x\in (0, 1)$, we obtain that $t'(x)\le t(x)$, with $t(x)$ defined in \eqref{eq:tx}. Therefore, the result for the $\rm BEC$ naturally follows.  
\end{proof}

\subsection{Sketch of the Proof of \eqref{eq:jointPeNZ}} \label{app:Battadep}

Eventually, let us briefly sketch how to prove the result stated in Remark \ref{rmk:Battadep}. The dependency on the Bhattacharyya parameter $Z(W)$ first appears in formula \eqref{eq:fin3}. Hence, under the hypothesis of Lemma \ref{lm:htoexp}, one can easily prove that
\begin{equation}\label{eq:newubZ}
{\mathbb E}\left[(Z_n(1-Z_n))^{\alpha}\right] \le \frac{g(Z(W))}{\delta}\left(2^{-\rho_1}+\sqrt{2}\frac{\delta}{1-\delta} c_3\right)^n,
\end{equation}
where $g(x)$ is defined in \eqref{eq:defgx}. Consequently, by following passages similar to those in the proof of Lemma \ref{lm:exptoBatta} in Appendix \ref{app:exptoBatta} and of Theorem \ref{th:scalingexp} in Section \ref{subsec:mathpart}, we conclude that  
\begin{equation}\label{eq:Zn0Batta1}
{\mathbb P}\left(Z_{n_0} \le Z(W)\cdot 2^{-2n_0}\right)\ge I(W)- c_8 \hspace{0.1em} 2^{-n_0/\mu},
\end{equation}
where $c_8$ is a constant. Note that, in formula \eqref{eq:lemma22}, $Z_{n_0+n_1}$ is upper bounded by a quantity that does not depend on $x$. In order to make this dependency appear, we use a procedure similar to that of the proof of Lemma 22 in \cite{HAU14}. As a result, we obtain that 
\begin{equation}\label{eq:Zn0Batta2}
{\mathbb P}\left(Z_{n_0+n_1} \le x^{\scriptstyle \frac{1}{2}\cdot 2^{\scriptstyle\sum_{i=1}^{n_1} B_i}}\mid Z_{n_0} = x\right)\ge 1- c_9 \hspace{0.1em} \sqrt{x}(1-\log_2 x),
\end{equation}
where $c_9$ is a constant. By combining \eqref{eq:Zn0Batta1} and \eqref{eq:Zn0Batta2}, the result follows by using arguments similar to those of the proof of Theorem \ref{th:unifexp} in Section \ref{subsec:proofjoint}.

\section*{Acknowledgment}
This work was supported by grant No. 200020\_146832/1 of the Swiss National Science Foundation. S. Hamed Hassani is supported by ERC Starting Grant under grant number 307036.

\bibliographystyle{IEEEtran}
\bibliography{lth,lthpub}

\end{document}